\renewcommand{\maketag@@@}[1]{\hbox{\m@th\normalsize\normalfont#1}}%
\definecolor{mygray}{gray}{0.6}
\def\dW{\boldsymbol{\mathcal{W}}}
\def\dX{\boldsymbol{\mathcal{X}}}
\def\dY{\boldsymbol{\mathcal{Y}}}
  \def\cC{{\mathcal{C}}} 
 \def\cF{{\mathcal{F}}}  
\def\cM{{\mathcal{M}}}  \def\cO{{\mathcal{O}}}
 \def\cZ{{\mathcal{Z}}}
\def\ba{{\mathbf{a}}} \def\bb{{\mathbf{b}}} \def\bc{{\mathbf{c}}}  
 \def\bg{{\mathbf{g}}}   
\def\bp{{\mathbf{p}}}   \def\bs{{\mathbf{s}}} 
 \def\bv{{\mathbf{v}}} \def\bw{{\mathbf{w}}} \def\bx{{\mathbf{x}}} \def\by{{\mathbf{y}}}
\def\bz{{\mathbf{z}}} 
\def\bA{{\mathbf{A}}}  \def\bC{{\mathbf{C}}}  
 \def\bG{{\mathbf{G}}} \def\bH{{\mathbf{H}}} \def\bI{{\mathbf{I}}}
  \def\bW{{\mathbf{W}}}  \def\bY{{\mathbf{Y}}}
\def\argmin{\mathop{\mathrm{argmin}}}
\def\argmax{\mathop{\mathrm{argmax}}}
\def\tr{\mathop{\mathrm{tr}}}
\def\real{\mathop{\mathrm{Re\!}}}
\def\diag{\mathop{\mathrm{diag}}}
     \def\d4{\!\!\!\!}
 \def\bpsi{\boldsymbol{\mathop{\mathrm{\psi}}}}
  \def\R{{\mathbb{R}}} \def\C{{\mathbb{C}}}
\def\lp{\left(}     \def\rp{\right)}
  \def\-{\! - \!}  \def\+{\! + \!}  \def\={\! = \!}  \def\>{\! > \!}
\def \log{\mathrm{log}}
\def\exp{\mathrm{exp}}
\def \rh{\mathrm{H}}
\def \rt{\mathrm{T}}
\newtheorem{proposition}{Proposition}
\newtheorem{lemma}{Lemma}
\newtheorem{remark}{Remark}
\newcommand{\bef}{\begin{figure}}
\newcommand{\eef}{\end{figure}}
\newcommand{\beq}{\begin{eqnarray}}
\newcommand{\eeq}{\end{eqnarray}}
\def\rh{\mathrm{H}}
\def\rt{\mathrm{T}}
\def\bOmega{\boldsymbol{\Omega}}
\def\bpsi{\boldsymbol{\psi}}
\def\treq{\triangleq} 
\def\tr{\mathop{\mathrm{tr}}}
\def\real{\mathop{\mathrm{Re}}}
\def\imag{\mathop{\mathrm{Im}}}
\title{Localization and Discrete Beamforming with a Large Reconfigurable Intelligent Surface}
\author{Baojia Luo,\thanks{B. Luo, Y. Deng, and Z. Huang are with Department of Mathematical Sciences, Tsinghua University, Beijing, China (e-mails:	\{ luobj19,dengyl21\}@mails.tsinghua.edu.cn, zhongyih@tsinghua.edu.cn).} Yili Deng, Miaomiao Dong, Zhongyi Huang,\thanks{M. Dong, X. Chen, W. Han, and B. Bai are with Theory Lab, Central Research Institute, 2012 Labs, Huawei Technologies Co. Ltd., Hong Kong, China (e-mails: \{dong.828599, chenxiang73, harvey.hanwei, baibo8\}@huawei.com).} Xiang Chen, Wei Han, and Bo Bai
\thanks{This work was partially supported by the National Natural Science Foundation of China (No. 12025104).
Parts of this paper have been presented at the IEEE International Conference on Communications 2022 \cite{luoReconfigurableIntelligentSurface2022}.}
\vspace{-0.0cm}}
\begin{document}
\maketitle

\begin{abstract}
In millimeter-wave (mmWave)  cellular systems, reconfigurable intelligent surfaces (RISs)  are foreseeably deployed with a large number of reflecting elements to achieve high beamforming gains.
The large-sized RIS will make radio links fall in the near-field localization regime with spatial non-stationarity issues.
Moreover, the discrete phase restriction on the RIS reflection coefficient incurs exponential complexity for discrete beamforming.
It remains an open problem to find the optimal RIS reflection coefficient design in polynomial time.
To address these issues, we propose a scalable partitioned-far-field protocol that considers both the near-filed non-stationarity and discrete beamforming.
The protocol approximates near-field signal propagation using a partitioned-far-field representation to inherit the sparsity from the sophisticated far-field and facilitate the near-field localization scheme. 
To improve the theoretical localization performance, we propose a fast passive beamforming (FPB) algorithm that optimally solves the discrete RIS beamforming problem, reducing the search complexity from exponential order to linear order.
Furthermore, by exploiting the partitioned structure of RIS, we introduce a two-stage coarse-to-fine localization algorithm that leverages both the time delay and angle information. 
Numerical results demonstrate that centimeter-level localization precision is achieved under medium and high signal-to-noise ratios (SNR), revealing that RISs can provide support for low-cost and high-precision localization in future cellular systems.

\end{abstract}

\begin{IEEEkeywords}
Reconfigurable intelligent surface, millimeter wave, discrete phase shift, optimal passive beamforming, near-field localization. \vspace{-0mm}
\end{IEEEkeywords}

 \vspace{-0mm}


\section{Introduction}~\label{sec:intro}
The fifth-generation (5G) cellular system
has leveraged the broadband millimeter wave (mmWave)  spectrum to provide over 10~Gbit/s peak data rate \cite{gao2021mimo}. 
In mmWave communications, large-sized antenna arrays are installed at the transceiver to combat severe high-frequency pathloss using directional beams \cite{gao2023integrated}. 
These pencil-like beams are signatured by different angles in space and have high angular resolutions, which reveal a potential of high-precision localization using the angle-of-arrival (AoA) and/or angle-of-departure (AoD) 
\cite{7999215,elzanaty6GHolographicLocalization2021a,ditarantoLocationAwareCommunications5G2014}.
Nevertheless, mmWave links are vulnerable to physical blockage due to high-frequency and directional transmission. 
To provide reliable and seamless coverage for location-aware services in mmWave networks, 
a large number of mmWave base stations (BSs) should be installed. 
This dense deployment, coupled with the large-sized antenna arrays at the BSs, results in high expenditures on the mmWave network planning and operation.

A promising solution to address the above issues is the reconfigurable intelligent surface (RIS), a two-dimensional plane comprising a large number of sub-wavelength reflecting units \cite{direnzoSmartRadioEnvironments2020,bjornsonReconfigurableIntelligentSurfaces2022, wuIntelligentReflectingSurfaceAided2021}.
These reflection units use positive-intrinsic-negative (PIN) diodes or varactors to adjust their reflection coefficients. 
By reasonably configuring the reflection coefficients, RISs can reflect signals with directional high-gain beams \cite{elzanatyReconfigurableIntelligentSurfaces2021}. The beamforming gain can effectively improve the radio link quality and increase as the size of RIS grows.  
Due to the removal of radio frequency components and passive reflection, large-sized RISs can be economically fabricated and then deployed on walls or roofs to replace the mmWave BSs  \cite{basarWirelessCommunicationsReconfigurable2019a}.
Therefore, a large-sized RIS assisted mmWave system is a cost-effective solution for achieving high-precision localization in 5G.

To fully unleash the potential of RIS,
one should optimally configure the RIS reflection coefficients (referred to do \textit{passive beamforming}) based on the channel state information.
Because of the large-sized RIS and short mmWave wavelength, the electromagnetic propagation environment of RIS is changed from the traditional far field to the near field \cite{JiangGao2022RIS,cuiChannelEstimationExtremely2022}.
Different from the channel model in the far field, there exists the 
spatial non-stationarity in the near field so that different RIS elements observe different channel multipath characteristics. 
This non-stationarity in the near-field propagation leads to more complicated channel modeling and thus 
requires a new philosophy in the design of passive beamforming and localization algorithms.
Existing works  \cite{alexandropoulosLocalizationMultipleReconfigurable2022a,wangJointBeamTraining2021,fascistaRISaidedJointLocalization2022,tengBayesianUserLocalization2022a,linChannelEstimationUser2021} 
that design the RIS reflection coefficients for localization in the far field cannot be directly applied in the near field.

Moreover, the hardware limitation of RIS  causes a scalability issue for the passive beamforming.
In practical implementation,  the PIN diode or varactor of each RIS unit is operated on a finite number of voltages, leading to discrete reflection coefficient settings. 
The discrete constraints make the RIS passive beamforming a combinatorial optimization problem whose solving complexity exponentially grows with the size of RIS. 
This computational complexity is unaffordable for a large-sized RIS, and one has to limit the sizes of RIS to find a tradeoff between the passive beamforming gain and computational complexity \cite{linReconfigurableIntelligentSurfaces2021,9374451,luoSpatialModulationRISAssisted2021}.  
In a large-sized RIS assisted localization protocol, it is imperative to design a high-gain and low-complexity passive beamforming algorithm.

 \vspace{-0mm}

\subsection{Related Work}~\label{subsec:Related Work}
 \vspace{-0mm}
 
In RIS-assisted mmWave systems, extensive research efforts \cite{rinchiCompressiveNearFieldLocalization2022,keykhosraviRISEnabledSISOLocalization2022, hanLocalizationChannelReconstruction2022,zhangMetaLocalizationReconfigurableIntelligent2021a,dardariNLOSNearFieldLocalization2021,huangNearFieldRSSBasedLocalization2022} have been dedicated to addressing the channel non-stationarity for the near-filed localization. 
Compressive sensing (CS)-based approaches were investigated to exploit the sparsity of near-field channels in \cite{rinchiCompressiveNearFieldLocalization2022,keykhosraviRISEnabledSISOLocalization2022, hanLocalizationChannelReconstruction2022}.  A received signal strength (RSS)-based approach in \cite{zhangMetaLocalizationReconfigurableIntelligent2021a} configures the  RIS reflection coefficients to generate a preferable RSS distribution and then utilizes the measured RSS at each user equipment (UE) for localization.  
Authors in \cite{rinchiCompressiveNearFieldLocalization2022,keykhosraviRISEnabledSISOLocalization2022, hanLocalizationChannelReconstruction2022,zhangMetaLocalizationReconfigurableIntelligent2021a} design new localization algorithms tailored for the near-field channel model.  
To use existing localization algorithms in the far field, the RIS partitioning technique has recently been proposed.
This technique partitions a large RIS into several small-sized sub-surfaces, each falling in the far field. 
Using the RIS partitioning technique, the time difference-of-arrival (TDoA) based localization was investigated in a   single-input single-output (SISO) Orthogonal Frequency Division Multiplexing (OFDM)
system \cite{dardariNLOSNearFieldLocalization2021}, and the AoD-based 
localization was studied in a multiple-input multiple-output (MIMO) narrow band system \cite{huangNearFieldRSSBasedLocalization2022}.
The \cite{dardariNLOSNearFieldLocalization2021} and \cite{huangNearFieldRSSBasedLocalization2022} exploit either the TDoA or AoD to estimate the UE position and achieve decimeter-level accuracy.
In the near-filed MIMO-OFDM system, there is a potential to jointly exploit the time delay and angle information to achieve centimeter-level accuracy with the assistance of RISs. 

In RIS-assisted mmWave localization systems, 
configuring the RIS reflection coefficients contributes to localization accuracy.
Theoretically, the Cram\'er-Rao Lower Bound (CRLB) for localization error can be derived as a function of the RIS reflection coefficients \cite{fascistaRISaidedJointLocalization2022}. 
However, the CRLB expression is often complicated, and it is intractable to design the RIS reflection coefficients by minimizing the CRLB. 
To circumvent this difficulty, the RIS configuration is typically investigated by solving a discrete-constrained passive beamforming problem (referred to \textit{discrete beamforming problem}) that maximizes the passive beamforming gain 
\cite{heLargeIntelligentSurface2019,bjornsonReconfigurableIntelligentSurfaces2022}. 
Efficient algorithms to solve the discrete beamforming problem have attracted substantial research attention \cite{anLowComplexityChannelEstimation2022,diHybridBeamformingReconfigurable2020,wuBeamformingOptimizationWireless2020,liJointDesignHybrid2022,zhangConfiguringIntelligentReflecting2022,sanchezOptimalLowComplexityBeamforming2021,ren2022linear,xiong2023optimal,pekcan2023achieving}.
As a combinatorial optimization problem, the discrete beamforming problem can be optimally solved by an exhaustive search  \cite{anLowComplexityChannelEstimation2022} or the branch-and-bound search \cite{diHybridBeamformingReconfigurable2020} on different RIS configurations, neither of which is computationally scalable for a large-sized RIS. 
To achieve the scalability, heuristic approaches \cite{wuBeamformingOptimizationWireless2020,liJointDesignHybrid2022} have been employed to obtain sub-optimal solutions.
Recently, several polynomial-complexity optimal methods have been proposed in \cite{zhangConfiguringIntelligentReflecting2022, sanchezOptimalLowComplexityBeamforming2021,ren2022linear,xiong2023optimal,pekcan2023achieving} to solve the discrete beamforming problem under simplified settings.
Specifically, the binary-phase beamforming strategy in \cite{zhangConfiguringIntelligentReflecting2022} focuses on the binary-phase setting of RIS with reflection coefficients selected from set $\{e^{j0},e^{j\pi}\}$, resulting in linear complexity, while the SISO beamforming strategy in \cite{sanchezOptimalLowComplexityBeamforming2021,ren2022linear,xiong2023optimal,pekcan2023achieving} consider only a single-antenna receiver.
However, neither of these methods can simultaneously handle the increased number of the RIS unit's phase shifts and the coupled inference of multi-antenna at the receiver.
Up to now, a polynomial-time algorithm for solving the optimal discrete beamforming problem in MIMO systems with general discrete RIS reflection coefficient settings is still lacking.

 \vspace{-0mm}

\subsection{Overview of Methodology and Contributions}~\label{subsec:Contributions}
 \vspace{-0mm}
 
In this paper, we investigate the UE localization in a RIS-assisted MIMO-OFDM system operating at mmWave bands. 
Sizes of the RIS are assumed to be large so that the RIS links are in the near-field. 
By approximating the near field to the partitioned far field, we devise a low-complexity near-field localization protocol including an efficient signal components separation method, an optimal and linear-time
discrete beamforming method, and a high-precision two-stage positioning method.
The main contributions of this paper are summarized as follows:
\begin{itemize}
\item  
We develop a balanced signaling method to separate the UE received signals into the non-line-of-sight (NLoS) component generated by scatters and the line-of-sight (LoS) component reflected by the RIS. 
The localization problem and the CRLB are formulated based on the separated LoS component.
Leveraging the balanced configuration of RIS, we simplify both the localization problem and the CRLB and derive the theoretical relationship between the CRLB and the RIS beamforming gain.

\item We propose a fast
passive beamforming (FPB) algorithm to optimally solve the discrete passive beamforming problem and reduce the search complexity from exponential order to linear order.
The FPB algorithm addresses the scalability issue in large-scale RIS configuration 
and improves the localization performance.
We provide theoretical analysis to demonstrate the optimality of the FPB algorithm. 
Our approach is applicable to the MIMO scenario and general discrete reflection coefficient settings of RIS.

\item We jointly exploit the time-of-arrival (ToA) and AoA information of the partitioned-far-field approximated channel and propose a two-stage coarse-to-fine localization algorithm that efficiently solves the non-convex localization problem.
Finally, we evaluate the efficacy of the proposed localization protocol and numerically show that the proposed scheme can achieve centimeter-level localization accuracy in medium and high signal-to-noise ratio (SNR) regimes.

\end{itemize}

The remainder of this paper is organized as follows.
In Section \ref{sec:System_Model}, the system model and partitioned-far-field representation are described.
In Section \ref{sec: protocol}, the overall framework of the localization protocol is presented.
The balanced signaling approach and the localization problems are formulated in Section \ref{sec: problem}.
The position error bounds are derived, and the FPB algorithm for RIS configuration is presented in Section \ref{sec: FIM}.
In Section \ref{sec:LocAlgo}, a two-stage coarse-to-fine localization algorithm is proposed. 
Simulation results and conclusions are given in Section \ref{simulation_sec} and \ref{Conclusions}, respectively.

\emph{Notations:}  $\ba$, $\bA$, $\mathcal{A}$ and $\boldsymbol{\mathcal{A}}$ denote vectors, matrices, sets, and tensors, respectively.
$\C$ and $\R$ represent the complex and real number sets.
$(\cdot)^*$, $(\cdot)^\rt$, and $(\cdot)^\rh$ denote the conjugate, transpose, and Hermitian transpose, respectively.
$\Vert \ba \Vert_2$ and $\left\|\dX \right\|_F$ denote the Euclidean norm and Frobenius norm.
$\tr(\bA)$, $\bA^{-1}$, $\mathrm{rank}(\bA)$ and $\mathrm{diag(\bA)}$ denote the trace of $\bA$, the inverse of $\bA$, the rank of $\bA$, and the diagonal elements of $\bA$, respectively.
The $(i,j,k)$-th entry of three-oreder tensor $\dX$ is denoted by $\dX_{i,j,k}$ while the vector $\ba\=\dX_{i,j,:}$ is obtained by stacking $\dX_{i,j,k}$ over index $k$ for fixed $i$-th row and $j$-th column.
The same definitions are applicable to vectors $\dX_{:,j,k}$ and $\dX_{i,:,k}$.
For a complex number $z \in \C$, $\mathrm{arg}(z)$ refers to the principal argument of $z$, $\real(z)$ the real part, $\imag(z)$ the imaginary part, and $\left| z \right|$ the modulus. 
${\boldsymbol 1}_N $, ${\boldsymbol 0}_{M\times N} $ and $\bI_N$ denote all-one vectors with size $N\times 1$, zero matrices with size $M\times N$ and identity matrices with size $N\times N$, respectively. 
$\odot$ and $\otimes$ denote the Hadamard and Kronecker products. 
$\mathcal{C}\mathcal{N}(\boldsymbol \mu,\bC)$ refers to the distribution of a circularly symmetric complex Gaussian random vector with mean vector $\boldsymbol \mu$ and covariance matrix ${\bC}$. 
Finally, $\mathbb{E}(\cdot)$ denotes the mathematical expectation while $\mathrm{Cov}(\cdot)$ denotes the auto-covariance operator.

\section{System Model}~\label{sec:System_Model}
In this section, we first describe the RIS-assisted localization mmWave system and then introduce the 
channel and signal models used in subsequent derivations. 
\vspace{-0mm}
\subsection{System Setup}

As illustrated in Fig.~\ref{fig:model_illustration}, we consider a RIS-assisted localization system consisting of one BS, UE, and RIS, each in a two-dimensional scenario. 
The localization system operates on a mmWave band with central frequency $f_c$ and bandwidth $B$. 
The bandwidth $B$ is equally divided into $N$ subcarriers with frequency spacing  $\Delta f$. 
The frequency at the $n$-th subcarrier is represented by $f_n = f_c + (n-(N+1)/2)\Delta f$, where $n \in \{1,2,\ldots,N\}$. Antennas at the BS and UE are uniform linear arrays (ULAs) with element numbers $N_\text{T}$ and $N_\text{R}$, respectively. 
For $i \in \{1, 2, \dots, N_\text{R}\}$ and $j \in \{1, 2, \dots,N_\text{T}\}$, the $i$-th receive antenna is located in a unknown position $\bp^{(i)}$, and the $j$-th transmit antenna of BS is located in a known position $\bp_\text{T}^{(j)}$. 
For ease of exposition, we construct a Cartesian coordinate system where the reference point is at the first antenna of BS, i.e., $\bp_\text{T}^{(1)}=[0,0]^\rt$. 
Since the position and orientation information of BS and RIS are all known, without loss of generality, we assume that the orientations of  the BS, UE, and RIS are in parallel.

The RIS contains $M$ elements with half-wavelength spacing $\lambda/2$, and for ease of exposition, the $M$ elements are assumed to reside in one line forming an $M$-element ULA, as illustrated in Fig.~\ref{fig:model_illustration}. 
The $m$-th element is located in a known position $\bp_\text{M}^{(m)}, m \in \{1, 2, \dots,M \}$.  
The RIS unit can adjust its reflection coefficient to customize the wireless propagation environment. 
When a certain voltage is applied to the $m$-th RIS element  at time  $t \in \{1, 2, \dots, T \}$, its reflection coefficient is set as $\beta_{m,t} \exp(j \omega_{m,t})$, where 
$\beta_{m,t}$ is the reflection amplitude, and $\omega_{m,t}$ is the adjustable phase shift. 
It is commonly assumed that $\beta_{m,t} \= 1$ and $ \exp(j \omega_{m,t})$ takes values from a finite discrete set $\mathcal{F} \= \{ \exp(j2\pi \cdot s / 2^b ), s \= 0,1,\ldots,2^b - 1 \}$, where 
$b\!\ge\!1$ refers to the number of bits for discrete phase adjustment.

 \begin{figure}
\centering
\includegraphics[width=0.4\textwidth]{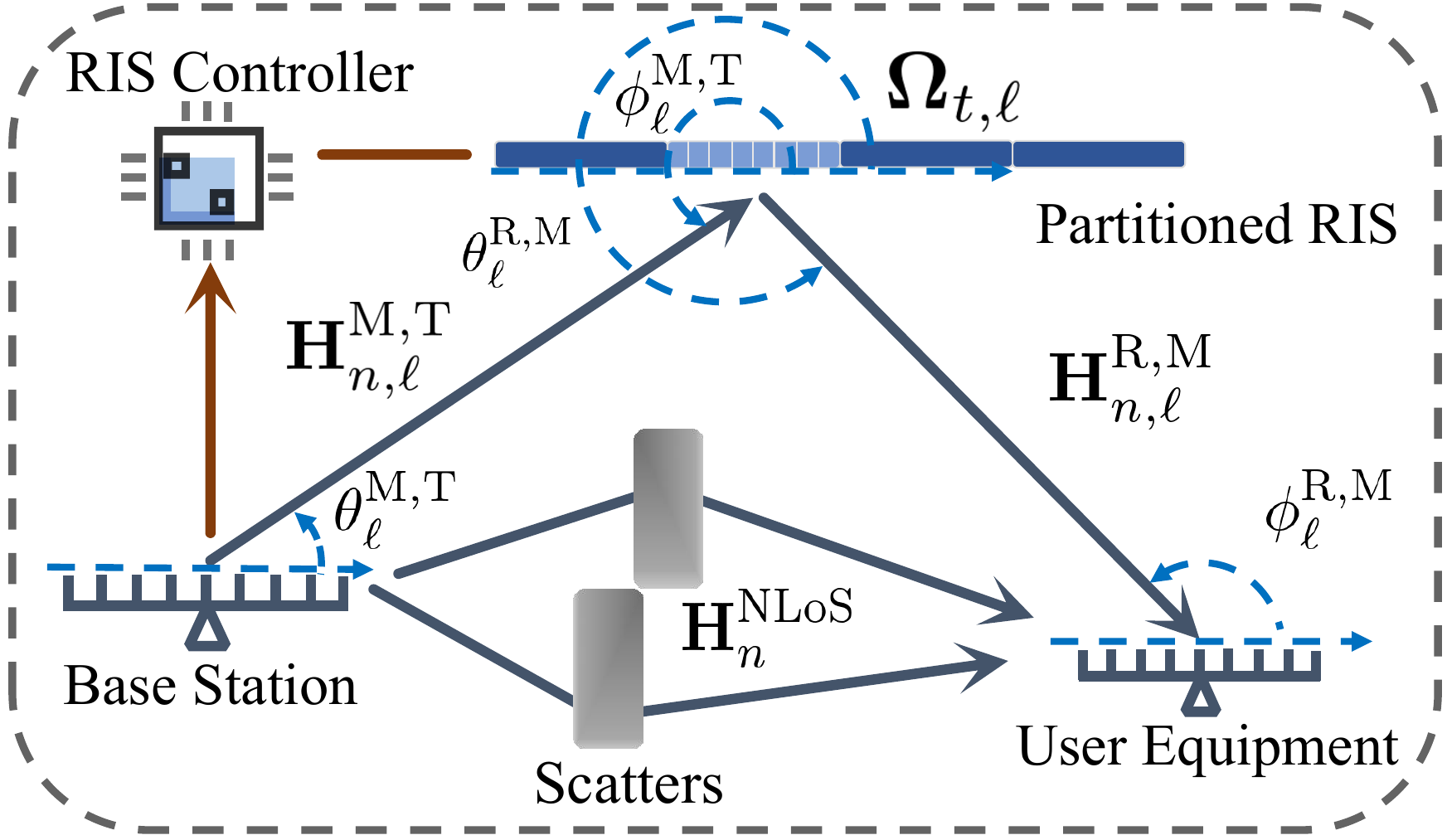}
\caption{RIS-assisted localization system.}
\label{fig:model_illustration}
\end{figure}

\subsection{Near-field Channel Model}

We consider a downlink localization system, where the UE estimates its position based on received signals from the BS. Since mmWave signals suffer from severe attenuation, in the downlink transmission, we only consider single-bounced mmWave links. This means we consider the scenario where the BS-RIS channel $\bH_n^\text{M,T} \!\in\! \C^{M \times N_\text{T}}$ and the RIS-UE channel $\bH_n^\text{R,M} \!\in\! \C^{N_\text{R} \times M}$ contain only the LoS component, and the direct channel $\bH_n^\text{NLoS}\!\in\! \C^{N_\text{R} \times N_\text{T}}$ between the BS and UE contains NLoS components.
The overall pilots' transmission time $T$ is assumed to be shorter than the channel coherence time.
Therefore, on the $n$-th subcarrier, channels $\bH_n^\text{M,T}$, $\bH_n^\text{R,M}$ and $\bH_n^\text{NLoS}$ are invariant to the time slot $t$.
The channel for subcarrier $n$ at time $t$ is then indicated as
\begin{equation}
\small
\bH_n(t) = \bH_n^\text{R,M} \boldsymbol \Omega_t \bH_n^\text{M,T} + 
\bH_n^\text{NLoS} \in \C^{N_\text{R} \times N_\text{T}}, 
\label{eqn: channel_all}
\end{equation}
where $\boldsymbol \Omega_t = \diag \left\lbrace \exp(j\omega_{1,t}),\exp(j\omega_{2,t}),\ldots,\exp(j\omega_{M,t}) \right\rbrace\in \C^{M \times M}$ represents the time-varying reflection coefficient matrix of RIS. 

As aforementioned in Section~\ref{sec:intro}, a large-sized RIS (i.e., large $M$) will yield near-field channels $\bH_n^\text{M,T}$ and $\bH_n^\text{R,M}$, leading to spatial non-stationarity.
Specifically, for $i =1, 2, \dots,N_\text{R}$ and $j = 1, 2, \dots,N_\text{T}$, the RIS reflected channel in \eqref{eqn: channel_all} between the $i$-th receive antenna of UE and the $j$-th transmit antenna of BS is
\begin{equation}\small
\begin{aligned}
&\left[ \bH_n^\text{R,M} \bOmega_t \bH_n^\text{M,T}\right]_{i,j}  \\  
& = \!\sum_{m=1}^M \!\sqrt{ \rho_{m,j}\rho_{m,i}}  \exp\big\lbrace \!-\!j2\pi f_n\left( \tau_{m,j} +\tau_{m,i} \right) \!+\! j \omega_{m,t}  \big\rbrace ,
\end{aligned}
\label{eqn: los_scalar}
\end{equation}
where $\tau_{m,j} = \Vert \bp_\text{M}^{(m)} - \bp_\text{T}^{(j)}\Vert_2/c$ is the ToA between the $j$-th transmit antenna of BS and the $m$-th element of RIS, and $\tau_{m,i} = \Vert \bp_\text{M}^{(m)}  - \bp^{(i)}\Vert_2/c$ is the ToA between the $i$-th receive antenna of UE and the $m$-th element of RIS. 
$c$ is the speed of light. 
$ \rho_{m,j}= \Vert \bp_\text{M}^{(m)} - \bp_\text{T}^{(j)}\Vert_2^{-\mu}$ and $\rho_{m,i} = \Vert\bp_\text{M}^{(m)} -\bp^{(i)}\Vert_2^{-\mu}$ are the  pathloss where $\mu>2$ is the scenario-depended pathloss exponent \cite{heLargeIntelligentSurface2019}. 
As for the remaining NLoS component in \eqref{eqn: channel_all}, we assume the widely-used multipath model
\footnotesize\begin{equation*}
\left[ \bH_n^\text{NLoS} \right]_{i,j} =  \sum_{k=1}^K \sqrt{\rho_{i,j}^{(k)}} \exp\left\lbrace -j2\pi f_n \tau_{i,j}^{(k)}\right\rbrace,
\end{equation*}\normalsize
where $\tau_{i,j}^{(k)}$ and $\rho_{i,j}^{(k)}$ are randomly-generated ToAs and pathloss scattered by $K$ unknown obstacles.

\vspace{-0mm}
\subsection{Partitioned-far-field Representation}
This work uses the ToA and AoA of the RIS reflected channel in \eqref{eqn: los_scalar} for localization. 
Because of the near-field channel structure, it is difficult to directly extract the angles and time delays from the near-field channel. 
To tackle this issue, we next use a partitioned-far-field technique to approximate the near-field channel in \eqref{eqn: los_scalar}.

The basic idea is to equally partition the large-sized RIS into $L\ge 1$ consecutive segments, as illustrated in Fig.~\ref{fig:model_illustration}, and thereby approximate the spherical wave propagation with the partitioned-planar wave. 
Here, the size of each segment, i.e., ${M\lambda}/{(2L)}$, should be small enough to ensure the far-field signal propagation between a partitioned RIS segment and a UE.
Therefore, the criterion is that the minimum signal propagation distance $D$ is greater than the Rayleigh distance $\frac{2}{\lambda} \left( \frac{M}{L}\frac{\lambda}{2}\right) ^2$.
In other words, $L$ is large enough to satisfy 
\begin{equation}
\small
L \ge \sqrt{\frac{\lambda}{2D}}M.
\label{eqn: Fraunhofer}
\end{equation}
Although  $D$ is unknown, we can find an estimate of the $D$ based on the area of interest. 
Once the constraint of $L$ in \eqref{eqn: Fraunhofer} is satisfied, the channel between BS and the $\ell$-th RIS segment $\bH^\text{M,T}_{n,\ell}\in\C^{(M/L)\times N_\text{T}}$ and the channel between UE and the $\ell$-th RIS segment $\bH^\text{R,M}_{n,\ell}\in\C^{N_\text{R}\times (M/L)}$ are both  in the far field. 
This yields a simplified approximation of the cascaded near-field channel $\bH_n^\text{R,M} \boldsymbol \Omega_t \bH_n^\text{M,T}$ in \eqref{eqn: los_scalar}, which is given by 
\begin{equation}\small
\bH_n^\text{R,M} \boldsymbol \Omega_t \bH_n^\text{M,T} =  \sum_{\ell=1}^L  \bH^\text{R,M}_{n,\ell}\boldsymbol \Omega_{t,\ell} \bH^\text{M,T}_{n,\ell},
\label{eqn: segment_ris}
\end{equation}
where $\boldsymbol \Omega_{t,\ell} \!\in\! \C^{(M/L)\times (M/L)}$ is
the reflection coefficient matrix of the $\ell$-th RIS segment. 
Since $\bH^\text{R,M}_{n,\ell}$ and $\bH^\text{M,T}_{n,\ell}$ in \eqref{eqn: segment_ris} are both far-field channels constructed by an LoS path, their angular-domain expressions are given by \cite{shahmansooriPositionOrientationEstimation2018b,dongSimulationStudyMillimeter2015}
\begin{align}
\small \bH^\text{M,T}_{n,\ell}  = \rho^\text{M,T}_{n,\ell} 
\ba_{\frac{M}{L}}\left(  \cos \phi_\ell^\text{M,T}\right)  \ba^\rh_{N_\text{T}} \left( \cos \theta^\text{M,T}_\ell\right) ,
\label{eqn: channel_bs_ris}
\end{align}
\begin{align}
\small \bH^\text{R,M}_{n,\ell} =  \rho^\text{R,M}_{n,\ell} 
\ba_{N_\text{R}}\left( \cos \phi_\ell^\text{R,M}\right)  \ba^\rh_{\frac{M}{L}} \left( \cos \theta^\text{R,M}_\ell\right) ,
\label{eqn: channel_ris_ue}
\end{align}
In \eqref{eqn: channel_bs_ris} and \eqref{eqn: channel_ris_ue}, $\ba_N(f) \=[ 1,\exp (  -j \pi f ),\ldots,\exp (  -j  \pi (N-1) f )]^\rt\!\in\! \C^{N \times 1}$ is the array response.
$\tau_\ell^\text{M,T}$, $\theta^\text{M,T}_\ell$ and $ \phi_\ell^\text{M,T}$ are the ToA, AoD, and AoA from the BS to the $\ell$-th RIS segment, respectively. 
Similar definitions apply to the parameters in the RIS-UE channel including $\tau_\ell^\text{R,M}$, $\theta^\text{R,M}_\ell$, and $ \phi_\ell^\text{R,M}$. 
Besides, $\rho^\text{M,T}_{n,\ell}  = \sqrt{ \rho^\text{M,T}_\ell  } \exp\lbrace -j 2\pi \tau_\ell^\text{M,T} f_n \rbrace$ and $\rho^\text{R,M}_{n,\ell} = \sqrt{ \rho^\text{R,M}_\ell }\exp\lbrace -j 2\pi \tau_\ell^\text{R,M} f_n \rbrace$ are the pathloss under each subcarrier.
\vspace{-0mm}
\subsection{UE Received Information}\label{subsec:Signal_model}

Before the localization process, the position information of the BS and RIS is sent to the UE. 
To accomplish the localization task, the BS sends  pilot signals at each subcarrier $n\=1,2,\ldots, N$ and time slot $t\=1,2,\ldots, T$. 
Under the proposed partitioned-far-field representation in \eqref{eqn: segment_ris}, the received signals over subcarrier $n$ and time $t$ is given by
\begin{equation}
\label{eqn: received_signal}\small
\by_{n,t} = \sum_{\ell=1}^L  \bH^\text{R,M}_{n,\ell}\boldsymbol \Omega_{t,\ell} \bH^\text{M,T}_{n,\ell}\bx_{n,t} + \bH_n^\text{NLoS} \bx_{n,t} + \bw_{n,t},
\end{equation}
where the transmitted pilot $\bx_{n,t} = \sqrt{P_\text{T}} \bv s_{n,t} \in \C^{N_\text{T} \times 1}$ is precoded by the BS beamformer $\mathbf{v}$. 
$P_\text{T}$ is the transmitted power, and $s_{n,t}$ represents the transmit symbol. 
$\bw_{n,t} \sim \mathcal{C}\mathcal{N}(\boldsymbol 0,\sigma^2 \bI_M)$ with noise power $\sigma^2$.
Since the transmitted symbols are known to both the BS and UE, we can assume that the pilot symbols are time-invariant and subcarrier-invariant, i.e., $\bs_{n,t} = \bs$. 
 The UE utilizes the $N \times T$ observations $\{\by_{n,t} \}_{n,t}$ in \eqref{eqn: received_signal} to estimate its own position $\bp$.


\begin{figure}
\centering
\includegraphics[width=0.4\textwidth]{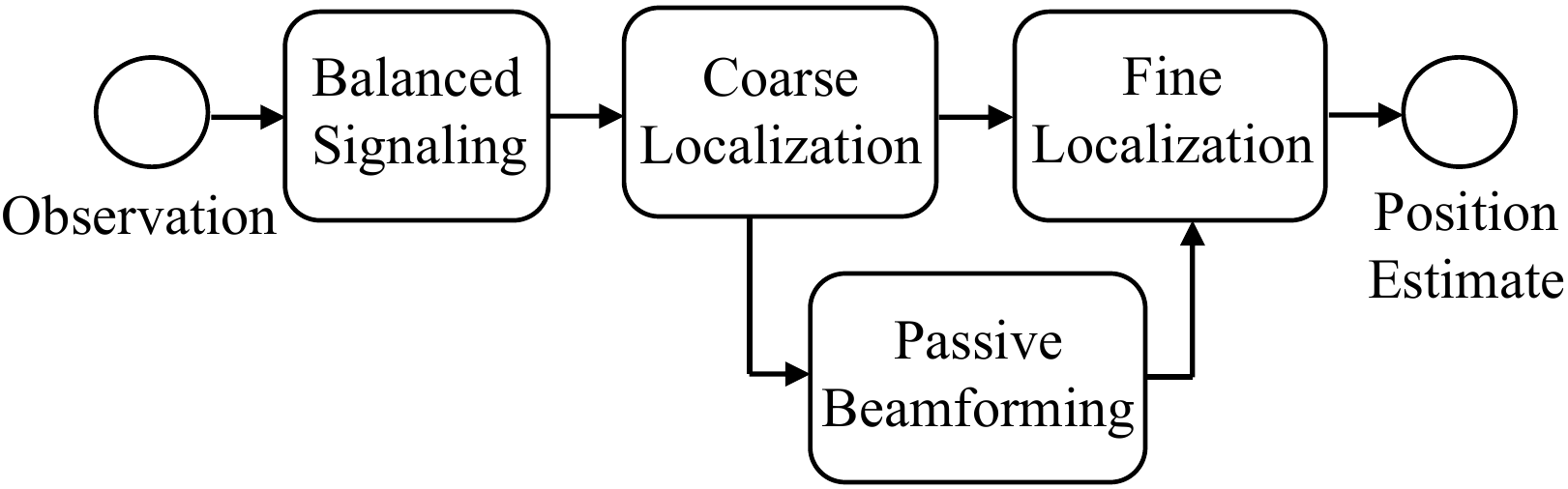}
\caption{Framework of the proposed localization protocol.}
\label{fig: task}
\end{figure}


\section{RIS-aided Localization Protocol}  \label{sec: protocol}
In this section, we schematically describe the proposed RIS-aided localization protocol, which comprises four modules: the \emph{balanced signaling module}, the \emph{coarse localization module}, the \emph{passive beamforming module}, and the \emph{fine localization module}, as sketched in Fig. \ref{fig: task}.

The localization scheme relies on cooperation between the BS and UE.
At the beginning of the localization task, the measurements $\{\by_{n,t}\}_{n,t}$ in \eqref{eqn: received_signal} are first processed by the balanced signaling module to extract the  LoS component at the UE. 
Taking the separated LoS component as the input, in the first $T/2$ time slots, the coarse localization module at the UE localizes itself with low complexity but a relatively large estimation error.
A  coarse localization result is sent to the passive beamforming module at the BS, which designs the optimal RIS reflection coefficient and configures the coefficient of RIS based on the received feedback. 
In the latter $T/2$ time slots, the fine localization module at the UE uses the measurements and the coarse localization result to localize itself with higher accuracy.
The functions of these four modules are sequentially introduced as follows.

\subsubsection{Balanced Signaling Module} 
This module processes the  measurements $\{\by_{n,t}\}_{n,t}$ in \eqref{eqn: received_signal} to separate the NLoS component $\bH_n^{\text{NLoS}} \bx_{n,t} $ and the RIS-reflected LoS component $\sum_{\ell=1}^L  \bH^\text{R,M}_{n,\ell}\boldsymbol \Omega_{t,\ell} \bH^\text{M,T}_{n,\ell}\bx_{n,t} $.
The separation is executed by configuring the time-variant RIS reflection coefficient $\{ \boldsymbol \Omega_{t,\ell} \}_{t,\ell}$ to satisfy the balanced condition, as detailed in Section~\ref{subsec: balanced_method}.
This separation enables us to eliminate interference from the random NLoS component and to use the RIS-reflected LoS component to formulate the localization problem in Section~\ref{subsec: loc_problem}.

\subsubsection{Coarse Localization Module}
Based on the observations at the first $T/2$ realizations, this module extracts the ToA and AoA information and utilizes them to coarsely estimate the UE's position.
The detailed description of this module is provided in Section~\ref{sec:LocCoarse}.
Under the partitioned-far-field representation, the coarse localization module provides multiple estimations of the UE's position, denoted by $\{ \hat{\bp}_\text{ce}^\ell \}_{\ell=1}^L$.
These estimates will be utilized by the passive beamforming module and the fine localization module below.

\subsubsection{Passive Beamforming Module} 
This module designs and configures the RIS reflection coefficients $\{ \boldsymbol \Omega_{t,\ell} \}_{t,\ell}$ to achieve high beamforming gain and improve the localization performance.
As emphasized in Section~\ref{subsec: CRLB} and \ref{subsec: beamform_problem}, the localization error bound, CRLB, can be lowered by solving the discrete-constrained passive beamforming problem (i.e., the discrete beamforming problem).
On this basis, in Section~\ref{subsec: FPB}, we propose an FPB algorithm to optimally solve this combinatorial optimization problem with only linear search complexity.

\subsubsection{Fine Localization Module} 
Based on the observations at all $T$ realizations, this module refines the coarse localization results after the RIS configuration.
Specifically, in Section~\ref{sec:LocFine}, we use a gradient-based quasi-Newton method to fuse the coarse position estimations $\{ \hat{\bp}_\text{ce}^\ell \}_{\ell=1}^L$ into a refined  $\hat{\bp}$ with higher localization accuracy.
The coarse and fine localization modules together form the proposed two-stage coarse-to-fine localization algorithm in Section~\ref{sec:LocAlgo}.

\section{Balanced Signaling and Localization Problem Formulation} \label{sec: problem}

In this section, we first present the balanced signaling method and formulate the RIS-assisted localization problem.
Subsequently, we derive the CRLB for each position $\bp\in\cM$ and highlight the connection between RIS beamforming gain and this localization error bound.

\subsection{Balanced Signaling Method} \label{subsec: balanced_method}
Note that the received signals $\{\by_{n,t}\}_{n,t}$ in \eqref{eqn: received_signal} are comprised of the RIS reflected signals $\sum_{\ell=1}^L  \bH^\text{R,M}_{n,\ell}\boldsymbol \Omega_{t,\ell} \bH^\text{M,T}_{n,\ell}\bx_{n,t} $ and non-RIS reflected signals $\bH_n^\text{NLoS} \bx_{n,t} $.
The RIS-reflected signal is adjusted by the RIS reflection coefficients $\boldsymbol \Omega_{t,\ell}$ at different time $t$.
The  balanced signaling method  exploits the time-variant $\boldsymbol \Omega_{t,\ell}$ to
separate the RIS-reflected LoS signal from the $\{\by_{n,t}\}_{n,t}$ in \eqref{eqn: received_signal} with the following procedures.

We set the RIS reflection coefficients $\boldsymbol \Omega_{t,\ell}$ to satisfy the following balanced sequence condition in time domain
\begin{equation}\small
\sum_{t=1}^{T}\boldsymbol \Omega_{t,\ell} = \mathbf{0}_{\frac{M}{L} \times \frac{M}{L}},\;\;\ell= 1,2,\ldots,L.
\label{eqn: RIS_phase_set}
\end{equation}
It guarantees $\sum_{t=1}^{T} \sum_{\ell=1}^L  \bH^\text{R,M}_{n,\ell}\boldsymbol \Omega_{t,\ell} \bH^\text{M,T}_{n,\ell} = \boldsymbol{0}_{N_\text{R} \times N_\text{T}} $, and thus the measurements of NLoS components $\bH_n^\text{NLoS} \bx_{n,t} $ are separated by averaging the measurements $\{ \by_{n,t} \}_{t=1}^T$
with respect to the time $t$, which is 
\begin{equation}
\small \frac{1}{T}\sum_{t=1}^T \by_{n,t}\!=\!\bH_n^\text{NLoS} \bx_{n,t}+ \frac{1}{T}\sum_{t=1}^T \bw_{n,t},\;\; n\!=\!1,2,\ldots,N.
\label{eqn: LOS_separated}
\end{equation}
On this basis, the measurements of the RIS-reflected LoS components are extracted as
\begin{equation}
\small \bar{\by}_{n,t} :=\! \by_{n,t} \!-\! \frac{1}{T}\sum_{t=1}^T \by_{n,t} = \underbrace{ \sum_{\ell=1}^L  \bH^\text{R,M}_{n,\ell}\boldsymbol \Omega_{t,\ell} \bH^\text{M,T}_{n,\ell}\bx_{n,t}}_{ \tilde{\by}_{n,t} (\bp) }  + \tilde{\bw}_{n,t}, 
\label{eqn: ris_extracted}
\end{equation}
where the last equality is obtained by substituting the  \eqref{eqn: LOS_separated} into \eqref{eqn: received_signal}. 
The extracted RIS-reflected component in \eqref{eqn: ris_extracted} is denoted by $\bar{\by}_{n,t}$ while its noise-free version is denoted by $\tilde{\by}_{n,t}(\bp)$.
The noise $\tilde{\bw}_{n,t}$ in the measurements $\{ \bar{\by}_{n,t} \}_{n,t}$ is given by
\begin{equation}
\small
\tilde{\bw}_{n,t} = \bw_{n,t} \- \frac{1}{T}\sum_{t=1}^T \bw_{n,t}. \label{eqn: co_noise}
\end{equation}
For ease of exposition, we denote the collection of all measurements $\{\bar{\by}_{n,t}\}_{n,t}$ in \eqref{eqn: ris_extracted} by a three-order measurement tensor $\bar{\dY} \!\in\! \mathbb{C}^{N \times T \times N_\text{R}}$ such that $\bar{\dY}_{n,t,:} \= \bar{\by}_{n,t}$.
Similar definitions are applicable to $\tilde{\dY} \!\in\! \mathbb{C}^{N \times T \times N_\text{R}}$, $\tilde{\dW} \!\in\! \mathbb{C}^{N \times T \times N_\text{R}}$, and ${\dW} \!\in\! \mathbb{C}^{N \times T \times N_\text{R}}$ such that $\tilde{\dY}_{n,t,:} \= \tilde{\by}_{n,t}$, $\tilde{\dW}_{n,t,:} \= \tilde{\bw}_{n,t}$, and ${\dW}_{n,t,:} \= {\bw}_{n,t}$.
Therefore, the separated measurements $\{\bar{\by}_{n,t}\}_{n,t}$ in \eqref{eqn: ris_extracted} are rewritten as 
\begin{equation}
\small
\bar{\dY} = \tilde{\dY}(\bp) + \tilde{\dW}.
\label{eqn: extracted_model}
\end{equation}

\begin{remark}\label{remark:model_motivation_loc}
Since there is only the LoS path in the cascaded RIS-reflected channel $\bH^\text{R,M}_{n,\ell}\boldsymbol \Omega_{t,\ell} \bH^\text{M,T}_{n,\ell}$
and positions of both the BS and RIS and the rules of setting $\{ \boldsymbol \Omega_{t,\ell}\}_{t,\ell}$ in \eqref{eqn: ris_extracted} are known by the UE, the channel parameters $\{\rho_\text{R,U}^l, \tau^l_\text{R,U}, \theta_\text{R,U}^l, \phi^l_\text{R,U}\}_{l=1}^L$ are functions of the UE position $\bp$.
Therefore, we can treat the $\tilde{\by}_{n,t}$ in \eqref{eqn: ris_extracted} and $\tilde{\dY}$ in \eqref{eqn: extracted_model} as the function of UE position $\bp$, i.e., $\tilde{\by}_{n,t}= \tilde{\by}_{n,t}(\bp)$ and $\tilde{\dY} = \tilde{\dY}(\bp)$.
This enables us to use only the RIS-reflected component $\{\bar{\by}_{n,t}\}_{n,t}$ or $\bar{\dY}$ to formulate the UE localization problem.

\end{remark}

\subsection{Localization Problem Formulation}  \label{subsec: loc_problem}
We now formulate the localization problem using the separated measurements $\bar{\dY}$ in \eqref{eqn: extracted_model}.
Denote by $\hat{\bp}$ the maximum likelihood estimate for the UE's position $\bp$.
The localization problem is expressed as the following maximum likelihood estimation problem:
\begin{equation}
\small
\hat{\bp}  = \argmax_{\bp\in \cM} \log\; p\left( \bar{\dY} | \bp \right) ,
\label{prob: mle1}
\end{equation}
where $\cM$ is the area of interest.
Given the UE's position $\bp$, $p\left(\bar{\dY}|\bp  \right)$ is the probability density function with respect to the observation $\bar{\dY}$ in \eqref{eqn: extracted_model}.

Note that, in the observation model \eqref{eqn: extracted_model}, the covariance matrix of Gaussian noise $\tilde{\dW}_{n,:,i}$ for each index $n$ and $i$ is $\mathrm{Cov}(\tilde{\dW}_{n,:,i})\= \sigma^2 \bI_T \- {\sigma^2}\boldsymbol{1}_T \boldsymbol{1}_T^\rt/T$
due to the balanced signaling procedure.
Hence, the log-likelihood function in \eqref{prob: mle1} has a closed form given by 
\begin{multline}
\footnotesize \log\; p\left( \bar{\dY} | \bp  \right) \propto
\footnotesize -\sum_{n=1}^{N} \sum_{i=1}^{N_\text{R}} 
\left(  \tilde{\dY}_{n,:,i} (\bp)   -\bar{\dY}_{n,:,i}  \right) ^\rh  
\\\footnotesize \times \left( \bI_T - \frac{1}{T} \boldsymbol{1}_T \boldsymbol{1}_T^\rt\right) 
\left( \tilde{\dY}_{n,:,i} (\bp)  - \bar{\dY}_{n,:,i}  \right),
\label{eqn: log-likelihood}
\end{multline}
Therefore, the localization problem in \eqref{prob: mle1} is equivalent to
\begin{align}
&\footnotesize \hat{\bp}  = \argmin_{\bp \in \cM} \sum_{n=1}^{N} \sum_{i=1}^{N_\text{R}} \left\lbrace
\left\|  \tilde{\dY}_{n,:,i} (\bp)   - \bar{\dY}_{n,:,i}  \right\|^2_2 
\right.  \notag\\ 
 &\footnotesize \left. ~~~~~~~~ - \frac{1}{T} \left\| \boldsymbol{1}_T^\rt \tilde{\dY}_{n,:,i} (\bp)- \boldsymbol{1}_T^\rt\bar{\dY}_{n,:,i}  \right\|^2_2 \right\rbrace
\overset{(a)}{=} \argmin_{\bp \in \cM}  \left\|  \tilde{\dY}(\bp)-\bar{\dY}  \right\|_F^2,
\label{prob: mle2}
\end{align}
where the equality (a) holds from the balanced configuration of the RIS coefficients in \eqref{eqn: RIS_phase_set}, i.e., $\boldsymbol{1}_T^\rt \tilde{\dY}_{n,:,i} (\bp)\=0$.

\subsection{Localization Error Bound } \label{subsec: CRLB}
In this subsection, we derive the CRLB and show that this localization error bound is upper bounded by the inverse of RIS beamforming gain.

For any unbiased estimate $\hat{\bp}$ of a position ${\bp}$, there exists 
a theoretically-attainable CRLB for the root-mean-square error (RMSE)  $\sqrt{\mathbb{E} \| \hat{\bp}   -{\bp} \|^2_2}$, which is denoted as $\text{CRLB}\left( \bp \right)$ at a position  $\bp$.
Using the likelihood function in \eqref{eqn: log-likelihood}, 
the element of Fisher information matrix (FIM) $[\bI(\bp)]_{j,k}$ is given by the direct result of computing the FIM of Gaussian distribution \cite{elzanatyReconfigurableIntelligentSurfaces2021}, which is,
\begin{multline}
\footnotesize [\bI(\bp)]_{j,k} \=\frac{2}{\sigma^2}\sum_{n=1}^{N}\sum_{i=1}^{N_\text{R}}
\real \left\lbrace \frac{\partial^\rh  \tilde{\dY}_{n,:,i}  }{\partial p_j}  \left( \bI_T \- \frac{1}{T} \boldsymbol{1}_T \boldsymbol{1}_T^\rt\right) 
\frac{\partial \tilde{\dY}_{n,:,i}  }{\partial p_k} 
\right\rbrace 
\\ \footnotesize \overset{(a)}{=}  \frac{2}{\sigma^2}   \sum_{n,t,i} \real \left\lbrace  \left(  \frac{\partial \tilde{\dY}_{n,t,i} }{\partial p_j}  \right)^*  \frac{\partial \tilde{\dY}_{n,t,i} }{\partial p_k} \right\rbrace,
\label{eqn:FIM_2}
\end{multline}
where $j,k\in \{1,2\}$.
The last equality (a) holds because $\boldsymbol{1}_T^\rt \tilde{\dY}_{n,:,i}$ is constant  with respect to $\bp$ and has zero gradients, which is guaranteed by the balanced configuration of RIS in \eqref{eqn: RIS_phase_set}.
With the FIM, the CRLB at a position $\bp$ is given by
\begin{equation}\small
\sqrt{\mathbb{E} \left\|\hat{\bp} - \bp \right\|_2^2 }\ge \sqrt{\tr\left\lbrace \bI^{-1}\left(\bp \right) \right\rbrace }\treq \mathrm{CRLB}\left( \bp \right).
\label{crlb:lower_bound_pos}
\end{equation}

The closed-form expression of the $\text{CRLB}\left( \bp \right)$ in \eqref{crlb:lower_bound_pos} is often complicated \cite{JiangGao2022RIS}. 
Previous work \cite{heLargeIntelligentSurface2019,bjornsonReconfigurableIntelligentSurfaces2022} has found through numerical simulations that increasing the RIS beamforming gain can reduce the localization error.
The following proposition~\ref{proposition: crlb_and_powergian} theoretically explains this phenomenon and reveals the relationship between the RIS beamforming gain and CRLB.

In proposition~\ref{proposition: crlb_and_powergian}, we show that the $\text{CRLB}( \bp)$ is upper bounded by the inverse of the RIS beamforming gain $F(\boldsymbol{\psi}_t)$. 
Here, the RIS beamforming gain $F(\boldsymbol{\psi}_t)$, as defined following the convention \cite{heLargeIntelligentSurface2019}, is introduced 
by substituting the LoS channel expressions in \eqref{eqn: channel_bs_ris} and \eqref{eqn: channel_ris_ue} into the signal model  \eqref{eqn: ris_extracted}. 
The noise-free received signal power $\Vert  \tilde{\dY}_{n,t,:}  \Vert^2_2$ satisfies the inequality
\begin{equation*}\footnotesize 
\begin{aligned}
\left\|  \tilde{\dY}_{n,t,:}  \right\|^2_2  \=  \Big\|  \sum_{\ell = 1}^L  \left( \boldsymbol{\psi}_{t,\ell}^\rt \bg_{\ell}  \right) \bz_{n,\ell}\Big\|^2_2 \!\le\! \underbrace{ \sum_{\ell = 1}^L   \left\| \boldsymbol{\psi}_{t,\ell}^\rt \bg_{\ell}    \right\|_2^2}_{F(\boldsymbol{\psi}_{t})} \!\times\! \sum_{\ell = 1}^L   \left\| {\bz}_{n,\ell}  \right\|_2^2,
\end{aligned}
\end{equation*}
where $\boldsymbol{\psi}_t \= [\boldsymbol{\psi}_{t,1}^\rt,\boldsymbol{\psi}_{t,2}^\rt,\ldots,\boldsymbol{\psi}_{t,L}^\rt]^\rt \!\in\! \mathcal{F}^{M}$ and $\boldsymbol{\psi}_{t,\ell} \!=\! \diag(\boldsymbol  \Omega_{t,\ell})\in \mathcal{F}^{M/L}$ represents the RIS phase of the $\ell$-th RIS partition.
The term $F(\boldsymbol{\psi}_{t}) \treq \sum_{\ell = 1}^L  \| \boldsymbol{\psi}_{t,\ell}^\rt \bg_{\ell}  \|_2^2$ defines the RIS beamforming gain, and $\bg_\ell \= \ba^*_{{M}/{L}} ( \cos \theta^\text{R,M}_\ell) \odot \ba_{{M}/{L}}(  \cos \phi_\ell^\text{M,T})$ is the end-to-end channel response. 
Both $\bg_\ell$  and 
{\small \begin{multline}
\bz_{n,\ell} = \sqrt{\rho^\text{M,T}_\ell \rho^\text{R,M}_\ell  } \exp\left\lbrace -j 2\pi \left(\tau_\ell^\text{M,T} + \tau_\ell^\text{R,M} \right) f_n \right\rbrace  \\ 
\times \ba_{N_\text{R}}\left( \cos \phi_\ell^\text{R,M}\right) \ba^\rh_{N_\text{T}} \left( \cos \theta^\text{M,T}_\ell\right) \bx,
\label{eqn: z_n}
\end{multline}
}%
are functions of UE position $\bp$ and are independent of the RIS coefficient $\{ \boldsymbol{\psi}_{t,\ell}\}_{t,\ell}$.

\begin{proposition}
Suppose the UE position  $\bp\=[p_1,p_2]^\rt$ is in a compact area.
If the following two constraints
\begin{equation}
\small \left\| \boldsymbol{\psi}_{t,\ell}^\rt \bg_{\ell}    \right\|_2 \neq 0, \quad \left\|  \frac{\partial \bz_{n,\ell} }{\partial  p_d} \right\|_2  \neq 0,\forall t,\ell,n,d,
\label{eqn: crlb_power_mild}
\end{equation}
hold, we have
\begin{align}
\small \mathrm{CRLB}\left( \bp\right) \le C_0\left[\sum_{t=1}^{T} \sum_{\ell = 1}^L   \left\| \boldsymbol{\psi}_{t,\ell}^\rt \bg_{\ell}    \right\|_2^2 \right]^{-\frac{1}{2}} \!\!\!\!\!= C_0\!\left[\sum_{t=1}^{T} F(\bpsi_t) \right]^{-\frac{1}{2}} ,
\label{eqn: crlb_lowerbound}
\end{align}
where $C_0\!>\!0$ is a constant and independent of the RIS coefficient $\{ \boldsymbol{\psi}_{t,\ell}\}_{t,\ell}$ and the UE position $\bp$.
\label{proposition: crlb_and_powergian}
\end{proposition}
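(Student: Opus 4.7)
My plan is to translate the claimed CRLB upper bound into a Loewner-order lower bound on the FIM and then invert. Since $\bp\in\R^2$, the FIM $\bI(\bp)$ is $2\times 2$, so $\mathrm{CRLB}^2(\bp)=\tr(\bI^{-1}(\bp))\le 2/\lambda_{\min}(\bI(\bp))$. It therefore suffices to exhibit a constant $c_0>0$, independent of $\{\bpsi_t\}_t$ and of $\bp\in\cM$, satisfying $\lambda_{\min}(\bI(\bp))\ge c_0\sum_{t=1}^T F(\bpsi_t)$; the bound in \eqref{eqn: crlb_lowerbound} then follows by setting $C_0=\sqrt{2/c_0}$.

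To construct $c_0$, I would expand $\bu^\rt \bI(\bp)\bu=(2/\sigma^2)\sum_{n,t}\|\nabla_\bu\tilde{\by}_{n,t}(\bp)\|_2^2$ for an arbitrary unit vector $\bu\in\R^2$, where $\nabla_\bu$ is the directional derivative with respect to $\bp$. Applying the product rule to $\tilde{\by}_{n,t}=\sum_\ell(\bpsi_{t,\ell}^\rt\bg_\ell)\bz_{n,\ell}$ splits the directional derivative into a beamforming-carrying part $\sum_\ell(\bpsi_{t,\ell}^\rt\bg_\ell)\nabla_\bu\bz_{n,\ell}$ and a weight-gradient part $\sum_\ell\bpsi_{t,\ell}^\rt(\nabla_\bu\bg_\ell)\bz_{n,\ell}$. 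The crux is to sum $\|\nabla_\bu\tilde{\by}_{n,t}\|_2^2$ across the $N$ subcarriers: because both $\bz_{n,\ell}$ and $\nabla_\bu\bz_{n,\ell}$ carry the subcarrier-dependent phase $\exp\{-j2\pi f_n(\tau_\ell^\text{M,T}+\tau_\ell^\text{R,M})\}$ and the round-trip delays differ between distinct partitions $\ell\neq\ell'$, the cross inner products $\sum_n\bz_{n,\ell}^\rh\bz_{n,\ell'}$ (and their derivative analogues) remain bounded while the diagonal $\sum_n\|\nabla_\bu\bz_{n,\ell}\|_2^2$ grows linearly in $N$. This diagonal dominance, combined with the non-degeneracy hypothesis \eqref{eqn: crlb_power_mild} that keeps $\|\nabla_\bu\bz_{n,\ell}\|_2$ bounded away from zero, yields
\begin{equation*}
\sum_n\|\nabla_\bu\tilde{\by}_{n,t}\|_2^2 \;\ge\; c_1(\bp,\bu)\sum_\ell\|\bpsi_{t,\ell}^\rt\bg_\ell\|_2^2 \;=\; c_1(\bp,\bu)\,F(\bpsi_t),
\end{equation*}
with $c_1(\bp,\bu)>0$ a continuous function of its arguments that does not depend on $\bpsi$.

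Summing over $t$ then gives $\bu^\rt\bI(\bp)\bu\ge(2c_1(\bp,\bu)/\sigma^2)\sum_t F(\bpsi_t)$, and minimizing $c_1(\bp,\bu)$ over the compact product set $\{\bu\in\R^2:\|\bu\|_2=1\}\times\cM$ produces the required uniform $c_0>0$ by continuity plus strict positivity. The main obstacle lives inside the subcarrier-summation step: rigorously certifying the diagonal dominance demands controlling the cross terms among distinct partitions as well as the interaction between the beamforming-carrying and weight-gradient pieces. The cleanest route is to treat $\{\exp(-j2\pi f_n\cdot)\}_n$ as a Dirichlet-type kernel and show that, under the partitioned-far-field geometry with distinct round-trip delays, the off-diagonal contributions stay $O(1)$ while the diagonals are $\Theta(N)$, so for sufficiently many subcarriers the lower bound is uniformly positive on $\cM$; an alternative is to absorb the cross terms via a $\|A+B\|^2\ge\tfrac12\|A\|^2-\|B\|^2$ splitting and invoke \eqref{eqn: crlb_power_mild} to keep the subtracted term dominated.
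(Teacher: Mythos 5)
Your skeleton matches the paper's at the top level: both reduce the CRLB bound to a lower bound on the Fisher information, expand $\partial\tilde{\dY}_{n,t,:}/\partial p_d$ by the product rule into the same two pieces (one carrying $(\bpsi_{t,\ell}^\rt\bg_\ell)\,\partial\bz_{n,\ell}/\partial p_d$, one carrying the position-derivative of $\bg_\ell$), and then extract $F(\bpsi_t)$. But your linear-algebra reduction opens a genuine gap. Passing through $\tr(\bI^{-1})\le 2/\lambda_{\min}(\bI)$ forces you to show $\bu^\rt\bI(\bp)\bu\ge c_1(\bp,\bu)\sum_t F(\bpsi_t)$ with $c_1$ strictly positive for \emph{every} unit direction $\bu$. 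Hypothesis \eqref{eqn: crlb_power_mild} only says the two coordinate partials $\partial\bz_{n,\ell}/\partial p_1$ and $\partial\bz_{n,\ell}/\partial p_2$ are individually nonzero; it does not prevent the directional derivative $u_1\,\partial\bz_{n,\ell}/\partial p_1+u_2\,\partial\bz_{n,\ell}/\partial p_2$ from vanishing for some $\bu$ (e.g., if the two partials are parallel as vectors in $\C^{N_\text{R}}$), in which case $c_1(\bp,\bu)=0$ and your ``continuity plus strict positivity'' compactness step collapses. The paper avoids exactly this by writing $\tr(\bI^{-1})\le\bigl(2+2\kappa(\bI(\bp))\bigr)/\tr(\bI(\bp))$ and then lower-bounding only the \emph{trace} --- a sum over the two coordinate directions, each of which \eqref{eqn: crlb_power_mild} does control --- at the price of folding the condition-number constant $M_0=\max_{\bp}\kappa(\bI(\bp))$ into $C_0$; any near-singularity of the FIM is absorbed there rather than having to be excluded.

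Separately, neither of your devices for the cross terms delivers the statement as claimed. The Dirichlet-kernel argument (off-diagonals $O(1)$, diagonals $\Theta(N)$) yields the inequality only for $N$ sufficiently large, while the proposition is asserted for the fixed finite $N$ of the system; and the $\|A+B\|^2\ge\tfrac12\|A\|^2-\|B\|^2$ splitting needs the subtracted term to be dominated, which \eqref{eqn: crlb_power_mild} does not supply. The paper instead (i) discards the $\bg_\ell$-gradient piece via the elementary fact that $\|z_1+z_2\|_2\ge\|z_1\|_2$ when the internal angle of the two terms is at most $\pi/2$, and (ii) controls the sum over partitions with a reversed Cauchy--Schwarz inequality whose constant depends on $\epsilon_0=\min\|\bpsi_{t,\ell}^\rt\bg_\ell\|_2>0$ and on $\|\bpsi_{t,\ell}^\rt\bg_\ell\|_2^2\le M/L$, valid at any fixed $N$. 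To salvage your $\lambda_{\min}$ route you would need an additional hypothesis guaranteeing uniform non-singularity of the FIM (e.g., linear independence of the stacked coordinate partials of $\tilde{\dY}$), which is strictly stronger than \eqref{eqn: crlb_power_mild}.
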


\begin{proof}
Please see Appendix \ref{appendix-C}.
\end{proof}

In  Proposition~\ref{proposition: crlb_and_powergian}, the constraint \eqref{eqn: crlb_power_mild}  requires that the reflection coefficients of the $\ell$-th RIS partition $\boldsymbol{\psi}_{t,\ell}$ are configured to be nonorthogonal with the combined end-to-end channel response  $\bg_\ell$, and that the $\bz_{n,\ell}$ is not completely insensitive to changes of the UE's position $\bp$.
These requirements are typically satisfied in most cases.
Additionally, the upper bound in \eqref{eqn: crlb_lowerbound} motivates us to maximize the passive beamforming gain $F(\boldsymbol{\psi}_t)$ to lower the $\text{CRLB}\left( \bp \right)$.

\section{Optimal Algorithm for Discrete Beamforming Problem} \label{sec: FIM}
In this section, we formulate the discrete beamforming problem to design the RIS reflection coefficient for achieving a high RIS beamforming gain. 
This is a combinatorial optimization problem and can be optimally solved by the proposed  FPB algorithm in linear time.

\subsection{Discrete Beamforming Problem} \label{subsec: beamform_problem}

According to \eqref{eqn: crlb_lowerbound}, we can 
improve the localization performance by maximizing the RIS beamforming gain $F(\boldsymbol{\psi}_t)$, which yields a discrete beamforming problem.

The RIS beamforming gain $F(\boldsymbol{\psi}_{t})$ in \eqref{eqn: crlb_lowerbound} is a bivariate function of the RIS reflection coefficient $\{\boldsymbol  \Omega_{t,\ell}\}_{t,\ell}$ and the unknown UE position $\bp$. 
In the first $T/2$ time slots, i.e., $t\=1,2,\ldots, T/2$, there is no prior to the UE’s position $\bp$ and we collect  measurements for 
the coarse localization model.
Here,  we  set the 
RIS reflection coefficient $\{\boldsymbol  \Omega_{t,\ell}\}_{t,\ell}$ as random or discrete Fourier transform-based phases \cite{dardariNLOSNearFieldLocalization2021}, which  satisfies the balanced condition $\sum_{t=1}^{T/2}\boldsymbol \Omega_{t,\ell} = \mathbf{0}_{\frac{M}{L} \times \frac{M}{L}}$.
For time slots $t\ge T/2+1$, we fix the UE position $\bp$ as the coarse localization result, as demonstrated in Fig.~\ref{fig: task}, and
design the RIS reflection coefficient $\{\boldsymbol  \Omega_{t,\ell}\}_{t,\ell}$ to maximizing the RIS beamforming gain $F(\boldsymbol{\psi}_{t})$.

As mentioned in Section~\ref{sec: protocol}, the coarse localizaton model provides $L$ position estimation in $\{\hat{\bp}_\text{ce}^\ell \}_{\ell=1}^L$. 
A high-quality estimate of the $\bp$ is the one in $\{\hat{\bp}_\text{ce}^\ell \}_{\ell=1}^L$ yielding the minimum objective value in \eqref{prob: mle2}, and we denote it by $\hat{\bp}_\text{ce}$.
Given the coarse estimate $\hat{\bp}_\text{ce}$, the $\bg_l$ and $\bz_{n,l}$ are known, and the RIS beamforming gain $ F(\boldsymbol{\psi}_t)$ becomes a univariate function of the RIS coefficient  $\boldsymbol{\psi}_t = [\boldsymbol{\psi}_{t,1}^\rt,\boldsymbol{\psi}_{t,2}^\rt,\ldots,\boldsymbol{\psi}_{t,L}^\rt]^\rt$.

Besides, note that the $\bg_\ell$ in $F(\boldsymbol{\psi}_t) \= \sum_{\ell = 1}^L \| \boldsymbol{\psi}_{t,\ell}^\rt \bg_{\ell} \|_2^2$ is independent of time $t$, so we only need to design the optimal RIS reflection coefficient for the time slot $t\=T/2+1$.
The reflection coefficients at  $t\=T/2+2, T/2 + 3,\ldots, T$ can be easily derived from those at $t\=T/2+1$ to satisfy the balanced condition \eqref{eqn: RIS_phase_set} from the time slot $t\=T/2+1$ to $T$. 
Whenever there is no ambiguity, in what follows, we discuss the case at $t\=T/2+1$ and simply denote the RIS coefficient matrix
$\boldsymbol{\psi}_{T/2 + 1,\ell}$ as  $\boldsymbol{\psi}_{\ell}$.

Therefore,  the discrete beamforming problem for the RIS reflection coefficient design is formulated as
\begin{subequations}
    \begin{equation}\small
        \underset{ \boldsymbol{\psi}  }{\text{maximize}}  \;\; \sum_{\ell = 1}^L \left\|  \boldsymbol{\psi}_{\ell} ^\rt \bg_\ell \right\|^2_2 \treq F(\boldsymbol{\psi})
        \label{eqn: beamforming_problem_obj} 
        ~~~~~~~~~~~~~~~~~~~~~~~~~~
    \end{equation}
    \begin{equation}\small
        \text{s.t.}~ \psi_{\ell,k}\!\in\!\! \left\lbrace \tiny\exp\lp j\frac{2\pi}{2^b}s \rp,s \= 0,1,\ldots,2^b - 1 \right\rbrace,\;\forall \ell,k, 
\label{Eqn:Constraint}
    \end{equation}
    \label{eqn: beamforming_problem} 
\end{subequations}where $\boldsymbol{\psi} \= [\boldsymbol{\psi}_{1}^\rt,\boldsymbol{\psi}_{2}^\rt,\ldots,\boldsymbol{\psi}_{L}^\rt]^\rt \!\in\! \mathcal{F}^{M}$. Without the discrete constraint \eqref{Eqn:Constraint} on each element of $\boldsymbol{\psi}$, i.e., $\psi_{\ell,k}$, problem \eqref{eqn: beamforming_problem}  will be degraded into a maximum-ratio-combination problem and the optimal solution is $\boldsymbol{\psi}^\star_{\ell} \= \bg_\ell^*, \forall \ell$. With the 
constraint \eqref{Eqn:Constraint}, 
the problem \eqref{eqn: beamforming_problem} is indeed a combinatorial optimization programming, which requires exponential complexity to obtain the optimal solution. 
Some widely-used polynomial-time approaches, such as the Closest Point Projection (CPP) method \cite{wuBeamformingOptimizationWireless2020,youChannelEstimationPassive2020} and the Semidefinite Relaxation (SDR) method\cite{wuIntelligentReflectingSurface2019b,linReconfigurableIntelligentSurfaces2021}, can only provide a sub-optimal solution.
In the next subsection, we will propose a linear-time FPB algorithm to optimally solve the combinatorial optimization problem \eqref{eqn: beamforming_problem}.

\subsection{Fast Passive Beamforming (FPB) Algorithm} \label{subsec: FPB}

We begin with providing a  necessary condition for the optimal solution $\bpsi_\ell^\star \!\in\! \mathcal{F}^{M/L}$ of problem \eqref{eqn: beamforming_problem} in Lemma \ref{lemma: opt_condition}.
This necessary condition serves as a geometric characterization for the optimal solution $\bpsi_\ell^\star$, indicating that each RIS phase shift $\bpsi_{\ell,k}^\star$ should be adjusted to make the angle between $\psi^\star_{\ell,k_1} g_{\ell,k_1} $ and $ \psi^\star_{\ell,k_2} g_{\ell,k_2} $ as close as possible for $k_1 \neq k_2$ where $\psi^\star_{\ell,k} $ and $g_{\ell,k} $ are the $k$-th elements of $\boldsymbol{\psi}_\ell^\star$ and $\bg_{\ell}$, respectively.

\begin{lemma}
The optimal solution $\{\boldsymbol{\psi}_\ell^\star\}_{\ell=1}^L$ of problem \eqref{eqn: beamforming_problem} must satisfy 
\begin{equation}\small
 A\bigg( \sum_{k \in S_1}\psi^\star_{\ell,k} g_{\ell,k} ,\sum_{k \in S_2} \psi^\star_{\ell,k} g_{\ell,k} \bigg) \le \frac{\pi}{2^b},\,\forall \, \ell,
\label{eqn: opt_condition}
\end{equation}
where $A(x,y) \!=\!  \pi \!-\!  \left|   \left| \mathrm{arg}(x) \!-\! \mathrm{arg}(y) \right| \!-\! \pi \right|$ is the function that calculates the internal angle of complex variables $x,y \!\in\! \C$.
In \eqref{eqn: opt_condition}, $\{S_1,S_2\}$ is any partition of the set $\{1,2,\ldots,M/L\}$, i.e., $S_1 \!\cap\! S_2 \= \emptyset, \; S_1 \!\cup\! S_2 \= \{1,2,\ldots,M/L\}$.
\label{lemma: opt_condition}
\end{lemma}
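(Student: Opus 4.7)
The plan is to exploit the separability of the objective in \eqref{eqn: beamforming_problem}: since $F(\boldsymbol{\psi}) = \sum_{\ell=1}^L \| \boldsymbol{\psi}_\ell^\rt \bg_\ell \|_2^2$ decouples across $\ell$, the optimal $\boldsymbol{\psi}_\ell^\star$ must separately maximize $| \boldsymbol{\psi}_\ell^\rt \bg_\ell |^2$ subject to $\psi_{\ell,k}\in\mathcal{F}$. So I fix a single $\ell$ and argue by a local perturbation that uses only feasibility-preserving moves inside $\mathcal{F}$.

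Fix any partition $\{S_1,S_2\}$ of $\{1,\ldots,M/L\}$ and set $u=\sum_{k\in S_1}\psi^\star_{\ell,k} g_{\ell,k}$ and $v=\sum_{k\in S_2}\psi^\star_{\ell,k} g_{\ell,k}$, so that $\boldsymbol{\psi}_\ell^{\star\rt}\bg_\ell = u+v$. For any $s\in\{0,1,\ldots,2^b-1\}$, multiplying every $\psi^\star_{\ell,k}$ with $k\in S_1$ by $e^{j2\pi s/2^b}$ produces a perturbed vector that still lies in $\mathcal{F}^{M/L}$, because $\mathcal{F}$ is a multiplicative group. The corresponding objective value becomes
\begin{equation*}
\left| e^{j2\pi s/2^b} u + v \right|^2 = |u|^2 + |v|^2 + 2|u||v|\cos(\alpha + 2\pi s/2^b),
\end{equation*}
where $\alpha = \mathrm{arg}(u) - \mathrm{arg}(v)$. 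Optimality of $\boldsymbol{\psi}_\ell^\star$ therefore forces
\begin{equation*}
\cos(\alpha) \;\ge\; \cos\!\left(\alpha + \tfrac{2\pi s}{2^b}\right), \qquad s = 0,1,\ldots,2^b-1.
\end{equation*}

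It remains to convert this family of cosine inequalities into the geometric bound $A(u,v)\le \pi/2^b$. The internal-angle function satisfies $A(u,v)\in[0,\pi]$ and $\cos(A(u,v))=\cos(\alpha)$, so the claim is equivalent to $\cos(\alpha)\ge \cos(\pi/2^b)$. I will argue by contradiction: suppose $A(u,v)>\pi/2^b$. Taking $\alpha$ in its principal value in $(-\pi,\pi]$ and WLOG $\alpha\ge 0$ (the other sign is symmetric), the hypothesis reads $\alpha>\pi/2^b$. Then the choice $s=2^b-1$ gives the shifted angle $\alpha-2\pi/2^b\in(-\pi/2^b,\pi-2\pi/2^b]\subset(-\pi,\pi)$, and the inequality $\alpha>\pi/2^b$ is exactly what is needed to conclude $|\alpha - 2\pi/2^b|<\alpha$. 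Using that cosine is even and strictly decreasing on $[0,\pi]$, this yields $\cos(\alpha-2\pi/2^b)>\cos(\alpha)$, contradicting the optimality inequality above. Hence $A(u,v)\le\pi/2^b$.

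The conceptual content is small, but the main obstacle is handling the wrap-around in the definition of $A(\cdot,\cdot)$ cleanly: one must keep $\alpha$ within a single period of the cosine so that the comparison of angles can be pulled back to a comparison of absolute values in $[0,\pi]$, and must verify that the perturbation $s=2^b-1$ (or $s=1$ for $\alpha<0$) actually strictly decreases the absolute angle when $|\alpha|>\pi/2^b$. Once that case analysis is carried out, the lemma follows immediately from the fact that every cyclic rotation of the $S_1$-phases is feasible and must not improve the objective.
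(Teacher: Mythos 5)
Your proof is correct and follows essentially the same route as the paper's: both exploit the fact that $\mathcal{F}$ is closed under multiplication to rotate the phases indexed by one block of the partition by a common element of $\mathcal{F}$, and then use optimality of $\boldsymbol{\psi}_\ell^\star$ to force the internal angle between the two partial sums below half the angular spacing $2\pi/2^b$. Your handling of the principal-value wrap-around via the explicit choice $s=2^b-1$ is a slightly more careful rendering of the paper's "nearest rotation" argument, but the underlying perturbation is identical.
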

\begin{proof}
Please see Appendix \ref{appendix-D}.
\end{proof}

According to Lemma~\ref{lemma: opt_condition}, the optimal solution $\{\boldsymbol{\psi}^\star_\ell\}_{\ell=1}^L$ of problem \eqref{eqn: beamforming_problem} is in the set of solutions   satisfying the condition in \eqref{eqn: opt_condition}.
In the following Lemma~\ref{lemma: equivalence_omega_beamformer}, we find the closed-form expression of $\{\boldsymbol{\psi}_\ell\}_{\ell=1}^L$ satisfying the  inequality \eqref{eqn: opt_condition}.

\begin{lemma}
A closed-form expression of 
$\{\boldsymbol{\psi}_\ell\}_{\ell=1}^L$ satisfying the  inequality constraint \eqref{eqn: opt_condition}
is given by 
\begin{equation}\small
\boldsymbol \psi_{\ell} 
= \exp \left\lbrace j\frac{2\pi}{2^b} \mathrm{round} \left(-\frac{2^b}{2\pi}\mathrm{arg}(\bg_\ell) + \frac{2^b}{2\pi}\omega_\ell  \right) \right\rbrace \label{omega_beamformer}, 
\end{equation}
where $\omega_\ell \= \mathrm{arg}( (\boldsymbol \psi_\ell)^\rt \bg_\ell)$.
The function $\mathrm{round}(x)$ returns the nearest integer to  $x$.
\label{lemma: equivalence_omega_beamformer}
\end{lemma}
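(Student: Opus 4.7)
The plan is to prove Lemma~\ref{lemma: equivalence_omega_beamformer} as a characterization of every $\boldsymbol{\psi}_\ell$ satisfying the angular inequality \eqref{eqn: opt_condition}: under the self-consistent definition $\omega_\ell = \mathrm{arg}((\boldsymbol{\psi}_\ell)^\rt \bg_\ell)$, the closed form \eqref{omega_beamformer} holds componentwise. The key tool I will invoke is a standard geometric fact about complex addition: if two nonzero vectors $u,v \in \mathbb{C}$ satisfy $A(u,v) < \pi$, then $u+v$ lies in the closed convex cone spanned by $u$ and $v$, so $\mathrm{arg}(u+v)$ lies on the shorter arc between $\mathrm{arg}(u)$ and $\mathrm{arg}(v)$, and in particular $A(u, u+v) \le A(u,v)$.

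The first step specializes \eqref{eqn: opt_condition} to the singleton partition $S_1 = \{k\}$, $S_2 = \{1, 2, \ldots, M/L\} \setminus \{k\}$ for each $k$, giving
\begin{equation*}
A\Big(\psi_{\ell,k}\, g_{\ell,k},\ \sum_{k' \neq k} \psi_{\ell,k'}\, g_{\ell,k'}\Big) \le \frac{\pi}{2^b}.
\end{equation*}
Applying the geometric fact with $u = \psi_{\ell,k}\, g_{\ell,k}$ and $v = \sum_{k' \neq k}\psi_{\ell,k'}\, g_{\ell,k'}$, whose sum $u+v = (\boldsymbol{\psi}_\ell)^\rt \bg_\ell$ has argument $\omega_\ell$, I obtain the per-component angular bound
\begin{equation*}
\big|\mathrm{arg}(\psi_{\ell,k}\, g_{\ell,k}) - \omega_\ell\big| \le \frac{\pi}{2^b} \pmod{2\pi},\qquad \forall\, k.
\end{equation*}

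The second step then exploits the discreteness of $\mathcal{F}$. Writing $\psi_{\ell,k} = \exp(j\tfrac{2\pi}{2^b} s_{\ell,k})$ with $s_{\ell,k} \in \{0, 1, \ldots, 2^b - 1\}$, the argument $\mathrm{arg}(\psi_{\ell,k}\, g_{\ell,k}) = \tfrac{2\pi}{2^b} s_{\ell,k} + \mathrm{arg}(g_{\ell,k})$ is constrained to a grid of spacing $2\pi/2^b$. The bound just derived forces $s_{\ell,k}$ to be the (generically unique) integer closest to $\tfrac{2^b}{2\pi}(\omega_\ell - \mathrm{arg}(g_{\ell,k}))$, and rewriting this coordinate by coordinate reproduces \eqref{omega_beamformer} verbatim.

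The main obstacle I anticipate is a clean, self-contained justification of the geometric lemma on $\mathrm{arg}(u+v)$, since care is required with the $2\pi$-periodicity of $\mathrm{arg}$ and the choice of principal branch; a WLOG rotation placing $u$ on the positive real axis reduces it to verifying the sign and bound of $\arctan(|v|\sin\theta/(|u|+|v|\cos\theta))$ for $|\theta| \le \pi/2^b$. A secondary subtlety is the boundary case in which $\tfrac{2^b}{2\pi}(\omega_\ell - \mathrm{arg}(g_{\ell,k}))$ is a half-integer and two values of $s_{\ell,k}$ both attain the equality $\pi/2^b$; this is non-generic and either choice remains consistent with \eqref{omega_beamformer} under any fixed rounding convention.
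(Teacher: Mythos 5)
Your proposal is correct and follows essentially the same route as the paper's Appendix~E proof: both specialize \eqref{eqn: opt_condition} to the singleton partition $S_1=\{k\}$, invoke the geometric fact that $\mathrm{arg}(u+v)$ lies within the angle spanned by $u$ and $v$ to conclude $A(\psi_{\ell,k}g_{\ell,k}, e^{j\omega_\ell})\le \pi/2^b$, and then use the $2\pi/2^b$ spacing of the phase grid to pin $\psi_{\ell,k}$ down to the rounding formula. The only cosmetic difference is that the paper packages the final discreteness step as an angle triangle inequality showing $A(\psi_{\ell,k},\hat{\psi}_{\ell,k}(\omega_\ell))<2\pi/2^b$, whereas you argue the nearest-integer property directly; the boundary (half-integer) case you flag is likewise glossed over by the paper via a strict inequality in its first step.
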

\begin{proof}
Please see Appendix \ref{appendix-E}.
\end{proof}

Note that the $\psi_{\ell}$ in \eqref{omega_beamformer} is a function of the  $\omega_l$. 
For any $\omega \in [0,2\pi)$, we define the $\omega$-beamformer
for the $\ell$-th RIS partition as 
\begin{equation}\small
 {\boldsymbol \psi}_{\ell}(\omega)
= \exp \left\lbrace j\frac{2\pi}{2^b} \mathrm{round} \left(-\frac{2^b}{2\pi}\mathrm{arg}(\bg_\ell) + \frac{2^b}{2\pi}\omega  \right) \right\rbrace \label{omega_beamformer2}.
\end{equation}
It is worth noting that taking $\omega \= 0$ in \eqref{omega_beamformer2}, i.e., ${\boldsymbol \psi}_\ell(0)$, is equivalent to designing the RIS coefficients by the CPP method.
Lemma~\ref{lemma: opt_condition} and Lemma~\ref{lemma: equivalence_omega_beamformer} reveal that the  optimal solution $\{\boldsymbol{\psi}_\ell^\star\}_{\ell=1}^L$ is contained in the $\omega$-beamformer set $\{{\boldsymbol \psi}_{\ell}(\omega),\omega \!\in\! [0,2\pi)\}$. 
This means we can search over the set  $\{{\boldsymbol \psi}_{\ell}(\omega), \omega \in [0,2\pi)\}$ to obtain the optimal solution of problem \eqref{eqn: beamforming_problem}.
However, due to 
the continuous $\omega \!\in\! [0,2\pi)$, the search involves an infinite number of trials and thus has high complexity. 
The following Proposition \ref{proposition: linear_complexity} indicates that one can conduct the search in linear time with respect to the number of RIS units $M$.

\begin{proposition}
Let $\omega_\ell^\star$ be the optimal solution to the  problem
\begin{equation}\small
\omega_\ell^\star = \argmax_{\omega \in [0,2\pi) } \left\| { \boldsymbol \psi}_{\ell}^\rt(\omega) \bg_\ell \right\|^2_2,\, l=1,2,\ldots,L.
\label{prob: omega_beamforming}
\end{equation}
where ${\boldsymbol \psi}_{\ell}(\omega)$ is the $\ell$-th $\omega$-beamformer as defined in \eqref{omega_beamformer2}.
Then, $\{{ \boldsymbol \psi}_\ell(\omega^\star_\ell) \}_{\ell=1}^L$ is  optimal to problem \eqref{eqn: beamforming_problem}. 
Furthermore, problem \eqref{prob: omega_beamforming} can be optimally solved by searching  over a finite set of order $\mathcal{O}(M)$, that is,
\begin{equation}\small
\omega_\ell^\star = \argmax_{\omega \in \mathcal{F}_\ell } \left\|{ \boldsymbol \psi}_{\ell}^\rt(\omega) \bg_\ell\right\|_2^2,
\label{prob: omega_beamforming2}
\end{equation}
where $\mathcal{F}_\ell$ is defined as
\begin{equation}\small
\begin{aligned}
\mathcal{F}_\ell = &  \left\lbrace 
\mathrm{arg}(g_{\ell,k}) + \frac{2\pi}{2^b}\left(  \left \lceil -\frac{2^b \cdot \mathrm{arg}(g_{\ell,k})}{2 \pi} \- \frac{1}{2}   \right \rceil  \!+\!  s\!+\! \frac{1}{2}\right) , \right.  \\ & ~~~~~~~~~~~~
\left. s=0,1,2,\ldots,2^b-1,k=1,2,\ldots,\frac{M}{L}\right\rbrace,
\end{aligned}\label{eqn: finite_set}
\end{equation}
and $\left \lceil \cdot \right \rceil$ is the ceiling operator.
\label{proposition: linear_complexity}
\end{proposition}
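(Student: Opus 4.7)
The plan is to prove Proposition~\ref{proposition: linear_complexity} in two stages mirroring its two conclusions: first, that the parametric family $\{\boldsymbol{\psi}_\ell(\omega):\omega\in[0,2\pi)\}$ produced by Lemma~\ref{lemma: equivalence_omega_beamformer} already contains a global optimizer of the discrete beamforming problem~\eqref{eqn: beamforming_problem}; and second, that within each scalar subproblem~\eqref{prob: omega_beamforming} the continuous search over $\omega$ collapses to a finite sweep over the set $\mathcal{F}_\ell$ of cardinality $\mathcal{O}(M)$ given in~\eqref{eqn: finite_set}.

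For the first stage, the key observation is that the objective $F(\boldsymbol{\psi})=\sum_{\ell=1}^{L}\|\boldsymbol{\psi}_\ell^{\rt}\bg_\ell\|_2^2$ is additively separable across the RIS partitions while the feasible set~\eqref{Eqn:Constraint} decouples coordinatewise, so \eqref{eqn: beamforming_problem} splits into $L$ independent scalar subproblems indexed by $\ell$. Any optimal $\boldsymbol{\psi}_\ell^\star$ of the $\ell$-th subproblem must satisfy the angle inequality~\eqref{eqn: opt_condition} by Lemma~\ref{lemma: opt_condition}, and Lemma~\ref{lemma: equivalence_omega_beamformer} then recasts every feasible vector satisfying~\eqref{eqn: opt_condition} in the closed form $\boldsymbol{\psi}_\ell(\omega_\ell)$ with $\omega_\ell=\mathrm{arg}(\boldsymbol{\psi}_\ell^{\rt}\bg_\ell)\in[0,2\pi)$. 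Hence maximizing $\|\boldsymbol{\psi}_\ell^{\rt}(\omega)\bg_\ell\|_2^2$ over $\omega\in[0,2\pi)$ as in~\eqref{prob: omega_beamforming} already attains the global maximum of the $\ell$-th subproblem, and the aggregated vector $\{\boldsymbol{\psi}_\ell(\omega_\ell^\star)\}_{\ell=1}^{L}$ is therefore optimal for~\eqref{eqn: beamforming_problem}.

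For the second stage, I would exploit the piecewise-constant structure of $\omega\mapsto\boldsymbol{\psi}_\ell(\omega)$. Each coordinate $\psi_{\ell,k}(\omega)$ is a step function of $\omega$ that jumps precisely when the argument of the rounding, $-(2^b/2\pi)\mathrm{arg}(g_{\ell,k})+(2^b/2\pi)\omega$, crosses a half-integer $m+1/2$. Solving this crossing equation for $\omega\in[0,2\pi)$ produces exactly $2^b$ breakpoints per coordinate $k$, of the form $\omega_{k,s}=\mathrm{arg}(g_{\ell,k})+(2\pi/2^b)(m_k+s+1/2)$ for $s=0,1,\ldots,2^b-1$, where the ceiling $m_k=\lceil -(2^b/2\pi)\mathrm{arg}(g_{\ell,k})-1/2\rceil$ is simply the smallest integer that keeps $\omega_{k,0}$ inside $[0,2\pi)$. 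Letting $k$ run from $1$ to $M/L$ reproduces exactly the finite set $\mathcal{F}_\ell$ in~\eqref{eqn: finite_set}, whose cardinality $2^b\cdot M/L$ is $\mathcal{O}(M)$ for fixed $b$ and $L$.

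The main technical obstacle is to argue cleanly that restricting the search to $\mathcal{F}_\ell$ still attains the supremum of $\|\boldsymbol{\psi}_\ell^{\rt}(\omega)\bg_\ell\|_2^2$ over the whole circle $[0,2\pi)$, and not merely on one side of each jump. The $2^b\cdot M/L$ breakpoints partition $[0,2\pi)$ into equally many open arcs on which $\boldsymbol{\psi}_\ell(\omega)$, and hence the objective, is constant; at each breakpoint only the corresponding single coordinate switches while all other coordinates remain unchanged. Fixing the convention $\mathrm{round}(m+1/2)=m+1$, the value $\boldsymbol{\psi}_\ell(\omega_{k,s})$ equals the arc value immediately to the right of $\omega_{k,s}$, so each sample in $\mathcal{F}_\ell$ lies interior to a unique arc. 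A bijection/pigeonhole argument between the $2^b\cdot M/L$ sample points and the $2^b\cdot M/L$ arcs then shows that every piece of the piecewise-constant objective is represented in the finite scan, so the maximum over $\mathcal{F}_\ell$ coincides with the maximum over $[0,2\pi)$; combining this with the separability established in the first stage yields~\eqref{prob: omega_beamforming2} and completes the proof.
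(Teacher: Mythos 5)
Your proposal is correct and follows essentially the same route as the paper's proof: Lemmas~\ref{lemma: opt_condition} and \ref{lemma: equivalence_omega_beamformer} confine the optimum to the $\omega$-beamformer family, and the piecewise-constant dependence on $\omega$ reduces the continuous search to the $2^b\cdot M/L$ breakpoints obtained from the half-integer discontinuities of the rounding operation, which is exactly the set $\mathcal{F}_\ell$. Your added care about the tie-breaking convention $\mathrm{round}(m+1/2)=m+1$ and the resulting arc-coverage argument is a minor refinement the paper leaves implicit, not a different approach.
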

\begin{proof}
Please see Appendix \ref{appendix-F}.
\end{proof}

Proposition~\ref{proposition: linear_complexity} means the problem \eqref{eqn: beamforming_problem}, \eqref{prob: omega_beamforming} and \eqref{prob: omega_beamforming2} are equivalent, and we can  solve the problem \eqref{prob: omega_beamforming2} by searching the $\omega\in\cF_l$  with at most $\sum_{\ell=1}^L \left| \mathcal{F}_\ell \right| \!=\! 2^b \times \frac{M}{L} \times L \!=\! 2^b M$  times. 
This implies the discrete beamforming problem \eqref{eqn: beamforming_problem} can be optimally solved with linear search time $\mathcal{O}(M)$.
In each search, one needs to compute the objective value in \eqref{prob: omega_beamforming2} with complexity $\cO(M)$, and thus the overall computational complexity of the FPB algorithm is quadratic, i.e., $\mathcal{O}(M^2)$.

Based on the designed RIS phase at $t\=T/2 + 1$, we now find the optimal RIS phase configuration at $t\=T/2+2,\ldots, T$. 
Although the 
optimal RIS reflection coefficients $\{\boldsymbol{\psi}_\ell^\star\}_{\ell=1}^L$ at $t\=T/2+1$ can be directly used at  $t\=T/2 +2,T/2 + 3,\ldots, T$ to achieve the maximum passive beamforming gain, the RIS reflection coefficients at  the $t=1,2,\ldots,T$ should be designed to 
satisfy the  balanced sequence condition \eqref{eqn: RIS_phase_set}, i.e., $\sum_{t=T/2+1}^T\boldsymbol{\psi}_{t,\ell} \=  \mathbf{0}_{M/L}$. 
Note that there exists a rotation-invariant property among the optimal solutions.
For example, suppose $\boldsymbol{\psi}^\star_t \= [(\boldsymbol{\psi}_{t,1}^\star)^\rt,(\boldsymbol{\psi}_{t,2}^\star)^\rt,\ldots,(\boldsymbol{\psi}_{t,L}^\star)^\rt]^\rt\in \mathcal{F}^{M}$ is the optimal solution of problem \eqref{eqn: beamforming_problem}, then $\exp(j\frac{2\pi s }{2^b}) \boldsymbol{\psi}^\star_t$ for $s \!\in\! \{ 0,1,\ldots,2^b - 1 \}$ turns out to be another optimal solution that shares the same passive beamforming gain, i.e., $F(\boldsymbol{\psi}^\star_t) \= F(\exp(j\frac{2\pi s_\ell }{2^b}) \boldsymbol{\psi}^\star_t)$.
On this basis, we set the RIS reflection coefficients 
at $t \= T/2+1,T/2+2,\ldots, T$ be $\boldsymbol{\psi}_{T/2+1,\ell} \= \boldsymbol{\psi}_{\ell}^\star$ and $\boldsymbol{\psi}_{T/2+t,\ell} \= \exp(j\frac{(t-1) \pi  }{2^{b-1}})\boldsymbol{\psi}_{\ell}^\star$.
On one hand, the optimality of $\{\boldsymbol{\psi}_{t,\ell}\}_{t,\ell}$ is guaranteed due to the rotation-invariant property.
On the other hand, we can verify that $\sum_{t=T/2+1}^T \exp(j\frac{(t-1) \pi  }{2^{b-1}})\boldsymbol{\psi}_{\ell}^\star \= \mathbf{0}_{M/L}$ because $\sum_{t=T/2+1}^T \exp(j\frac{(t-1) \pi  }{2^{b-1}})\= 0$ holds when $T$ is set as a multiple of $2^{b+1}$.
Under this setting, the balanced sequence condition $\sum_{t=1}^T\boldsymbol{\psi}_{t,\ell} \=  \mathbf{0}_{M/L}$ is naturally satisfied.

The result of Proposition \ref{proposition: linear_complexity} and its extension for $t>T/2 + 1$ together form the proposed FPB algorithm, as summarized in Algorithm \ref{alg: FPB_algorithm}.

\begin{algorithm}
\caption{FPB Algorithm}
\label{alg: FPB_algorithm}
\LinesNumbered 
\KwIn{Position of BS $\bp_{\text{T}}$, position of the $\ell$-th RIS segment $\bp_\text{M}^{\ell} (\ell\!=\! 1,2,\ldots,L)$, coarse estimate of UE position $\hat{\bp}_\text{ce}$}
\KwOut{Optimal RIS reflection coefficients $\{ \boldsymbol \psi^\star_{t,\ell} \}_{t,\ell}$ for $t> T/2$}
\For{$\ell \in \{1,2,\ldots,L\}$}{
	Compute the incident AoA $\phi_\ell^\text{M,T}$ from $\bp_{\text{T}}$ to $\bp_\text{M}^{\ell}$\;
	Compute the incident AoD $\theta^\text{R,M}_\ell$ from $\bp_\text{M}^{\ell}$ to $\hat{\bp}_\text{ce}$ \;
	Compute $\bg_\ell = \ba^*_{\frac{M}{L}} \left( \cos \theta^\text{R,M}_\ell\right) \odot \ba_{\frac{M}{L}}\left(  \cos \phi_\ell^\text{M,T}\right) $\;
	Form $\mathcal{F}_\ell$ using \eqref{eqn: finite_set}\;
	Obtain $\omega_\ell^\star$ by solving the problem \eqref{prob: omega_beamforming2}  \;
    Compute $\boldsymbol \psi^\star_{\ell} = { \boldsymbol\psi}_{\ell}(\omega^\star_\ell)$ using \eqref{omega_beamformer2}\;
}

\For{$t \in \{T/2+1,T/2+2,\ldots,T\}$}{
Compute $\boldsymbol{\psi}_{t,\ell}^\star = \exp\left(j\frac{(t-1) \pi  }{2^{b-1}}\right)\boldsymbol{\psi}_{\ell}^\star,\forall \ell$ \;
}
\end{algorithm}

 \vspace{-3mm}
\section{Coarse-To-Fine Localization Algorithm} \label{sec:LocAlgo}
In this section, we propose a two-stage coarse-to-fine algorithm to solve the localization problem \eqref{prob: mle2}. Note that the passive beamforming module in Section~\ref{sec: FIM} is between the coarse and fine localization modules. In the following two subsections, we sequentially present the  coarse  and  fine localization modules.

\subsection{Coarse Localization Module}\label{sec:LocCoarse}
As problem \eqref{prob: mle2} is non-convex, directly solving it  with a gradient descent approach may yield a sub-optimal solution with significant estimation error. 
To circumvent this challenge, we first leverage the extracted AoAs and ToAs from part of the LoS measurements $\bar{\dY}$ in \eqref{eqn: extracted_model}.
Specifically, we use measurements from the initial $T/2$ time slots to coarsely estimate the UE's position $\bp$.
The AoA $f_{\ell}\!:=\! \cos \phi_\ell^\text{R,M}$ and ToA $\tau_\ell \!:=\! \tau_\ell^\text{M,T} + \tau_\ell^\text{R,M}$ of   the $L$ RIS-partitioned segments  are extracted from the  
partitioned-far-filed representation of the BS-RIS-UE channel in \eqref{eqn: ris_extracted}, i.e., $\sum_{\ell=1}^L  \bH^\text{R,M}_{n,\ell}\boldsymbol \Omega_{t,\ell} \bH^\text{M,T}_{n,\ell}$.
To do this, we substitute the expressions in \eqref{eqn: channel_bs_ris} and \eqref{eqn: channel_ris_ue} into \eqref{eqn: ris_extracted} and obtain 
\begin{equation}\small
\begin{bmatrix}\bar{\dY}_{1,t,:}\\ \bar{\dY}_{2,t,:} \\ \vdots \\ \bar{\dY}_{N,t,:}   \end{bmatrix} =
\sum_{\ell=1}^L c_{t,\ell} \ba_N( 2\Delta f \tau_\ell) \otimes {\ba}_{N_\text{R}}(f_\ell) +  
\begin{bmatrix}\tilde{\dW}_{1,t,:}\\ \tilde{\dW}_{2,t,:}  \\ \vdots \\ \tilde{\dW}_{N,t,:}  \end{bmatrix},
\label{eqn: tensor}
\end{equation}
where  the $\ell$-th element of the amplitude vector $\bc_{t}=[c_{t,1},c_{t,2},\ldots,c_{t,L}]^\rt$ is
$  c_{t,\ell} =\sqrt{\rho^\text{M,T}_\ell \rho^\text{R,M}_\ell  } \exp \lbrace -j 2\pi \tau_\ell  (f_c -\frac{N-1}{2}\Delta f )  \rbrace    ( \boldsymbol{\psi}_{t,\ell}^\rt \bg_{\ell}  )\ba^\rh_{N_\text{T}}  ( \cos \theta^\text{M,T}_\ell ) \bx$ and $t\=1,2,\ldots,T/2$.

Note that estimating the two-dimensional parameters $\{(\tau_\ell,f_{\ell})\}_{\ell=1}^L$ from \eqref{eqn: tensor} is challenging since existing methods such as the two-dimensional multiple signal classification algorithm and the two-dimensional simultaneous orthogonal matching pursuit algorithm often incur extremely high time complexity.
To mitigate this issue, we propose to relax the two-dimensional parameter estimation problem \eqref{eqn: tensor} into two one-dimensional sub-problems, which significantly reduces the computational complexity.

With a little abuse of notation, we stack all observations in \eqref{eqn: tensor} and denote ${\bY}_{t},{\bW}_{t} \in \C^{N_\text{R} \times N}$ as
\begin{equation*}\footnotesize
\begin{aligned}
 {\bY}_{t}  \!=\! \left[ \bar{\dY}_{1,t,:},\bar{\dY}_{2,t,:},\ldots,\bar{\dY}_{N,t,:} \right],
  {\bW}_{t} \!=\! \left[  \tilde{\dW}_{1,t,:},\tilde{\dW}_{2,t,:},\ldots,\tilde{\dW}_{N,t,:}\right] .
\end{aligned}
\end{equation*}
Using the symmetry of the Kronecker product, the \eqref{eqn: tensor} is rewritten into
\begin{equation*}\small
\begin{aligned}
 {\bY}_{t}^\rt & = \left[ \ba_N( 2\Delta f \tau_1),\ldots,\ba_N( 2\Delta f \tau_L)\right]  \bC_{t}^\text{ToA}  +  {\bW}_{t}^\rt, \\
 {\bY}_{t} &= \left[ \ba_{N_\text{R}}(f_{1}), \ldots,\ba_{N_\text{R}}(f_{L})\right]  \bC_{t}^\text{AoA}  +  {\bW}_{t},
\end{aligned}
\end{equation*}
where $\bC_{t}^\text{ToA} \!=\! \diag(\bc_t) \left[ \ba_{N_\text{R}}(f_{1}), \ldots,\ba_{N_\text{R}}(f_{L})\right] ^\rt \!\in\! \C^{L \times N_\text{R}}$ and $\bC_{t}^\text{AoA} \!=\! \diag(\bc_t) \left[ \ba_N( 2\Delta f \tau_1),\ldots,\ba_N( 2\Delta f \tau_L)\right]^\rt \!\in\! \C^{L \times N}$.
Although each of the $L$ RIS-partitioned segments has a ToA and an AoA, in the coarse localization module, we only estimate one ToA and one AoA using the beam scanning philosophy given by 
\begin{equation}\small
\hat{\tau} = \argmax_{\tau \in [0,1/\Delta f]}  \sum_{t=1}^{T/2} \left\| \ba_N^\rh( 2\Delta f \tau)\bar{\bY}_{t}^\rt  \right\|^2_2,
\label{pro:toa_fft}
\end{equation}	
and
\begin{equation}\small
\hat{f} = \argmax_{f \in [-1,1]}  \sum_{t=1}^{T/2} \left\|    {\ba}_{N_\text{R}}^\rh(f)  \bar{\bY}_{t}  \right\|^2_2.
\label{pro:aoa_fft}
\end{equation}
Sub-problems \eqref{pro:toa_fft} and \eqref{pro:aoa_fft} can be efficiently solved using the inverse fast Fourier transform (IFFT) method. 

Next, we utilize the estimated ToA $\hat{\tau}$ and  AoA $\hat{f}$ to determine the UE's position $\bp$ coarsely. Note that the ToA $\hat{\tau}$ and AoA $\hat{f}$ can be related to any of the $L$ RIS segments. 
We assume that the estimated  $(\hat{\tau},\hat{f})$ is associated with the reflection path through each of the $L$ RIS partitions and generate in total $L$ coarse estimates 
 $\{\hat{\bp}_\text{ce}^\ell\}_{\ell=1}^L$  given by
\begin{equation}\small
\hat{\bp}_\text{ce}^\ell = \bp_{\text{M}}^{\ell}  + \left(  c \hat{\tau} - \left\|   \bp_{\text{M}}^{\ell}  - \bp_{T} \right\|_2 \right) \left[ -\hat{f},-\sqrt{ 1 - \hat{f^2}} \right] ^\rt,\, \forall \,l,
\label{eqn: geometric} 
\end{equation}
where $\bp_{\text{M}}^{\ell}$ is the position of the $\ell$-th RIS partition.
As emphasized in the proposed protocol in Section~\ref{sec: protocol}, the information of coarse localization result $\{\hat{\bp}_\text{ce}^\ell \}_{\ell=1}^L$ in \eqref{eqn: geometric} is utilized by both the passive beamforming module and the fine localization module.
The passive beamforming module employs the one in $\{\hat{\bp}_\text{ce}^\ell \}_{\ell=1}^L$ with the minimum objective value in \eqref{prob: mle2}, i.e., $\hat{\bp}_\text{ce}$, as the input.
On the other hand, the fine localization module utilizes all the $L$ coarse estimates $\{\hat{\bp}_\text{ce}^\ell \}_{\ell=1}^L$ to enhance the localization accuracy, as detailed at the below.

\subsection{Fine Localization Module}\label{sec:LocFine}
In this subsection, we exploit the LoS measurements $\bar{\dY}$ at all the $T$ time slots in \eqref{eqn: extracted_model} and propose a gradient-based method to fuse the $L$ coarse estimates $\{ \hat{\bp}_\text{ce}^\ell \}_{\ell=1}^L$ obtained from \eqref{eqn: geometric} into a refined estimate.
Although the $\{ \hat{\bp}_\text{ce}^\ell \}_{\ell=1}^L$ are coarse estimates, their accuracies can be in sub-meter level and 
the accuracy of these coarse estimates guarantees that one of the $\{ \hat{\bp}_\text{ce}^\ell \}_{\ell=1}^L$ is highly likely to be in the vicinity of the global optimum.
Therefore, the basic idea of the proposed fusion method is to apply the gradient-based quasi-Newton iteration initiated with each coarse estimate $\hat{\bp}_\text{ce}^\ell$ independently for $\ell=1,2,\ldots, L$.
Intuitively, one of the quasi-Newton iterations may approach the global optimum through the procedure of gradient descent.
The $\ell$-th iterative scheme is given by
\begin{equation}\small
\bp_{0}^{\ell} =\hat{\bp}_\text{ce}^\ell,\; \bp_{k}^{\ell} = \bp_{k-1}^{\ell}  -\lambda_k \bG_k  \left(\bg_k^{\ell} \right)^\rt,\; k\!=\!1,2,\ldots,K,\label{eqn:itr_grad}
\end{equation}
where $K$ is the maximum number of iterations and $\lambda_k$ is the step size.
The gradient term $\bg_k^\ell\in \R^{1\times D}$ is computed using
\begin{align}
&\footnotesize \bg_k^\ell := \frac{\partial \left\|  \tilde{\dY}\left(\bp_{k-1}^{\ell}\right)-\bar{\dY}  \right\|_F^2 }{\partial \bp}\nonumber \\ 
 &\footnotesize  =  2\sum_{n=1}^{N} \sum_{t=1}^{T} \real\left\lbrace \left(  \tilde{\dY}_{n,t,:}\left(\bp_{k-1}^{\ell} \right) -  \bar{\dY}_{n,t,:} \right)^\rh  \frac{\partial \tilde{\dY}_{n,t,:}\left(\bp_{k-1}^{\ell}\right) }{\partial \bp}\right\rbrace , 
 \label{eqn:y_grad}
\end{align}
where the Jacobi matrix ${\partial \tilde{\dY}_{n,t,:}(\bp) } / {\partial \bp} \in \C^{N_\text{R} \times D}$ is computed using the partitioned-far-file representation in \eqref{eqn: ris_extracted}.
The approximation to the inverse Hessian matrix, $\bG_k$ in \eqref{eqn:itr_grad}, is provided by the well-known Broyden–Fletcher–Goldfarb–Shanno (BFGS) method. 
Among all the refined estimates $\{ \bp_{K}^{\ell} \}_{\ell=1}^L$, the final estimate of the UE's position $\hat{\bp}$ is obtained by selecting the refined estimate with the minimum objective function value in \eqref{prob: mle2}, which is mathematically given by 
\begin{equation}\small
\hat{\bp} = \argmin_{ 1 \le \ell \le L  } \left\| \tilde{\dY}(\bp_{K}^{\ell}) -  \bar{\dY} \right\|^2_F.
\label{eqn:fine_est}
\end{equation}

The coarse-to-fine localization algorithm consists of the coarse localization  and the fine localization modules, as described in Fig.~\ref{fig: task}.
The computational complexity is dominated by the IFFT-based line search in the coarse localization and the gradient-based iteration in the fine localization.
In the stage of coarse localization, the ToA $\hat{\tau}$ and the AoA $\hat{f}$ are estimated by solving \eqref{pro:toa_fft} and \eqref{pro:aoa_fft}.
The complexity is $\mathcal{O}( TN_r N_{F_1}\log(N_{F_1}) )$ and $\mathcal{O}( TN N_{F_2}\log(N_{F_2}))$ where $N_{F_1}$ and $N_{F_2}$ denote the sampling numbers of IFFT in \eqref{pro:toa_fft} and \eqref{pro:aoa_fft}, respectively.
Typically, $N_{F_1} \!\!=\!\! c_0 N$ and $N_{F_2} \!\!=\!\! c_0 N_\text{R}$ where $c_0\!>\!0$ is the oversampling factor.
In the stage of fine localization, the main complexity of each iteration is the computation of the theoretical measurements $\tilde{\dY}(\bp)$ and its gradient ${\partial \tilde{\dY}(\bp) } / {\partial \bp}$ in \eqref{eqn:y_grad}. 
These operations have complexity on the order of $\mathcal{O}( KNTL(N_t M + M N_r))$.

\section{Simulation Studies}   \label{simulation_sec}

In this section, we numerically evaluate the performance of the proposed localization protocol, including the discrete beamforming algorithm and the localization algorithm.

\subsection{Simulation Setup}\label{subsec: case_setting}
We consider a RIS-aided downlink 2D scenario, where the BS is deployed at position $\bp_\text{T} \= (0,0)$m with $N_\text{T}\=32$ transmit antennas, and the UE is randomly located in the area of interest $\cM \= \lp 10,30 \rp \times \lp 10,30\rp \text{m}^2$ with $N_\text{R}\=16$ receive antennas.
The RIS is deployed at position $\bp_\text{M} \= (15,40) {\rm m}$ with the number of RIS elements ranging from $M\=32$ to $M\=256$. To satisfy the criterion of the valid far-field condition in \eqref{eqn: Fraunhofer}, 
the RIS with $M\=256$ should be divided into at least $L \= 4$ 
segments under the partitioned-far-field representation in \eqref{eqn: segment_ris}.
Each RIS element has $2^b$ candidate discrete reflection coefficients, where we set $b\=1$ or $b\=2$ in the subsequent simulation.
We focus on the mmWave system operated at the carrier frequency  $f_c \= 60$Ghz and set the subcarrier bandwidth as $\Delta f\=120$kHz.
During the pilot transmission, we set the numbers of subcarriers $N \= 128$ and transmission time slots $T \= 16$.
It is noteworthy that values of the $N$ and $T$ are set much smaller than those in the existing literature \cite{dardariNLOSNearFieldLocalization2021,huangNearFieldRSSBasedLocalization2022}, which shows a low-overhead advantage of our proposed RIS-assisted localization system.
We set the transmit power $P_\text{T}\= 30$dBm.
The signal-to-noise ratio is defined as $\text{SNR} \=   10 \log_{10} ({P_\text{T}}/{\sigma^2}) $dB, where the noise power $\sigma^2$ varies under different $\text{SNR}$ settings.

\subsection{Simulation Results: Passive Beamforming}\label{subsec: result_bf}

\begin{figure}[!t]
		\centering
	\subfloat[$M=64,b=1$\label{fig: bf_cdf_b1_M64} ]{
		\includegraphics[width=1.6in]{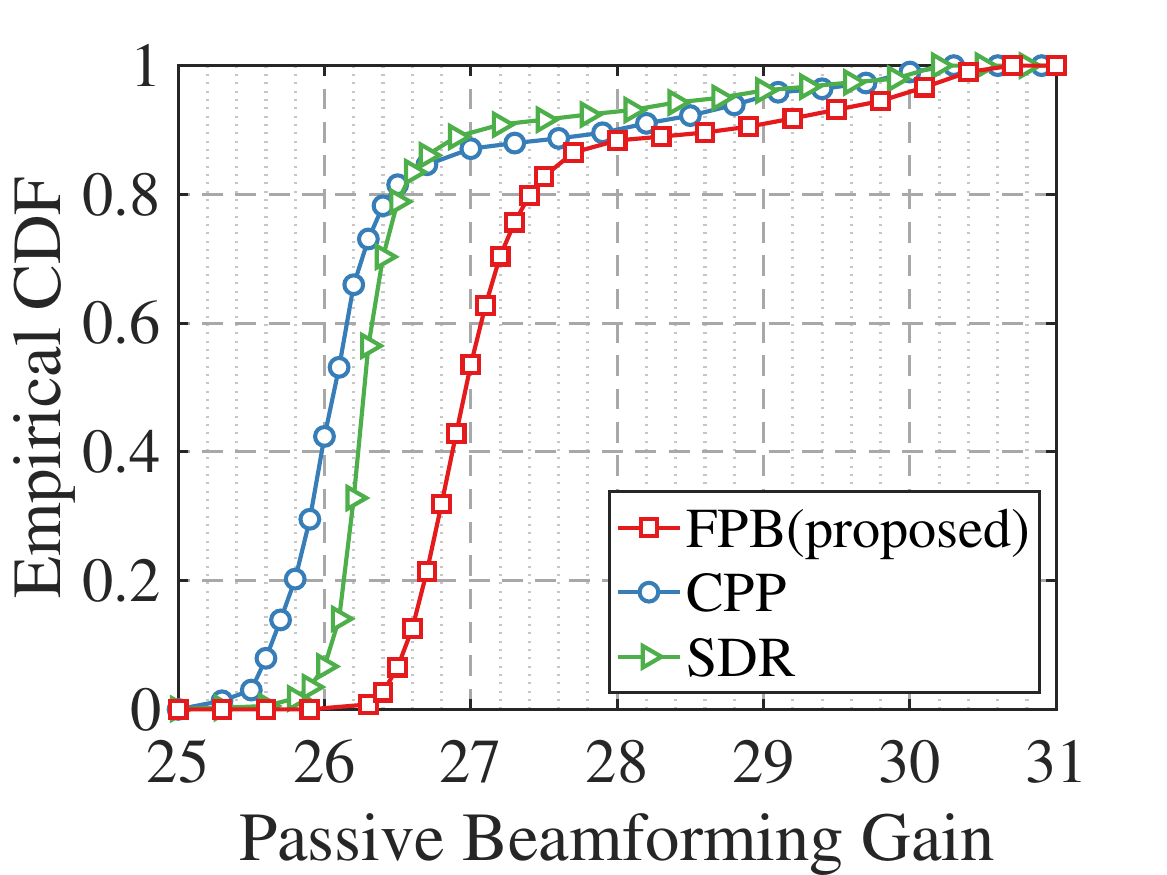}}
	\subfloat[$M=256,b=1$\label{fig: bf_cdf_b1_M256}]{
		\includegraphics[width=1.6in]{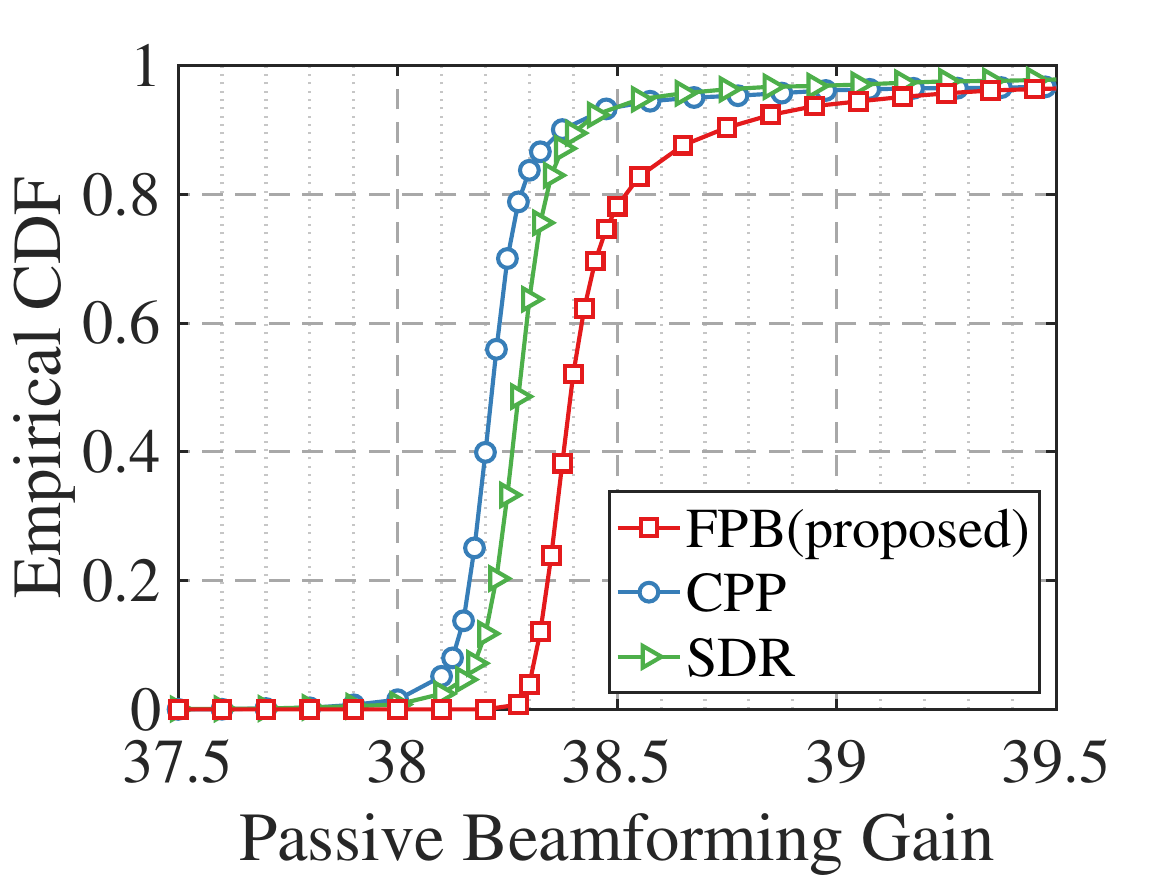}}
	
		\centering
	\subfloat[$M=64,b=2$\label{fig: bf_cdf_b2_M64} ]{
		\includegraphics[width=1.6in]{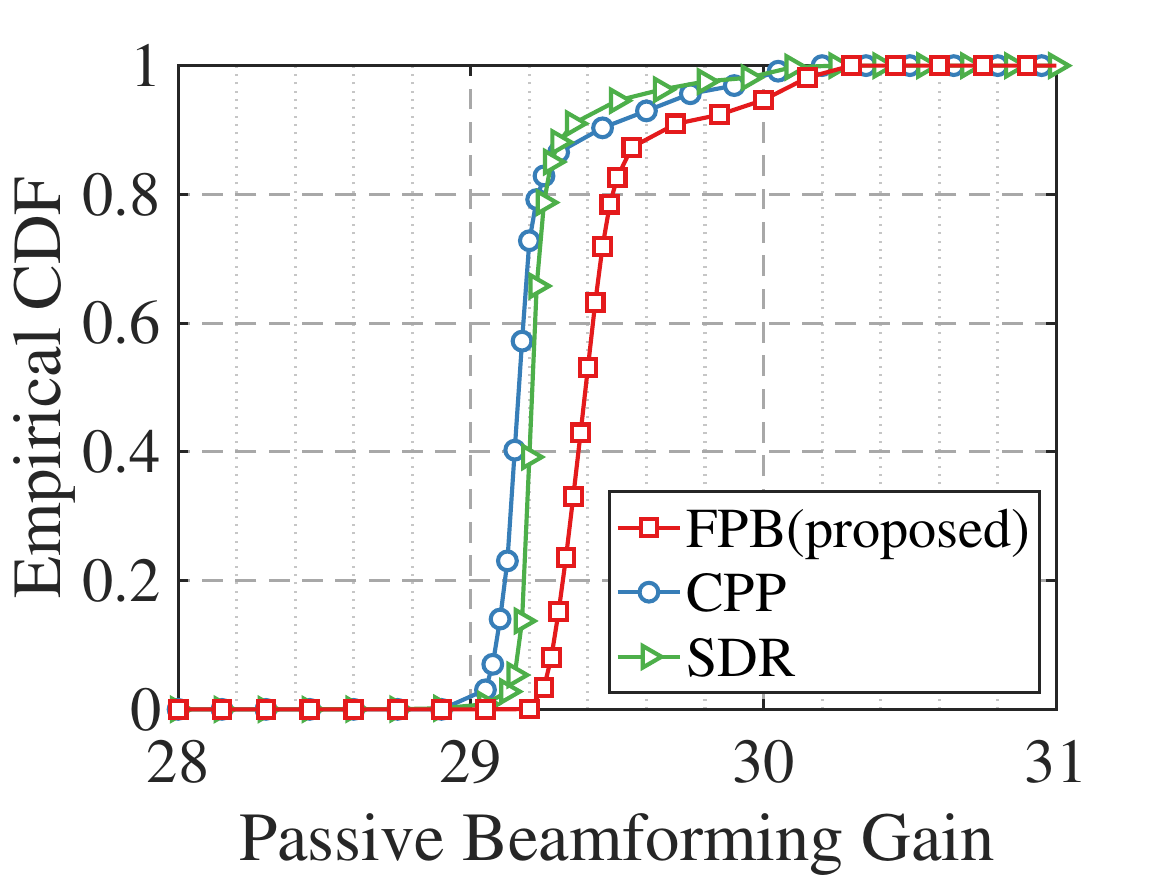}}
	\subfloat[$M=256,b=2$\label{fig: bf_cdf_b2_M256}]{
		\includegraphics[width=1.6in]{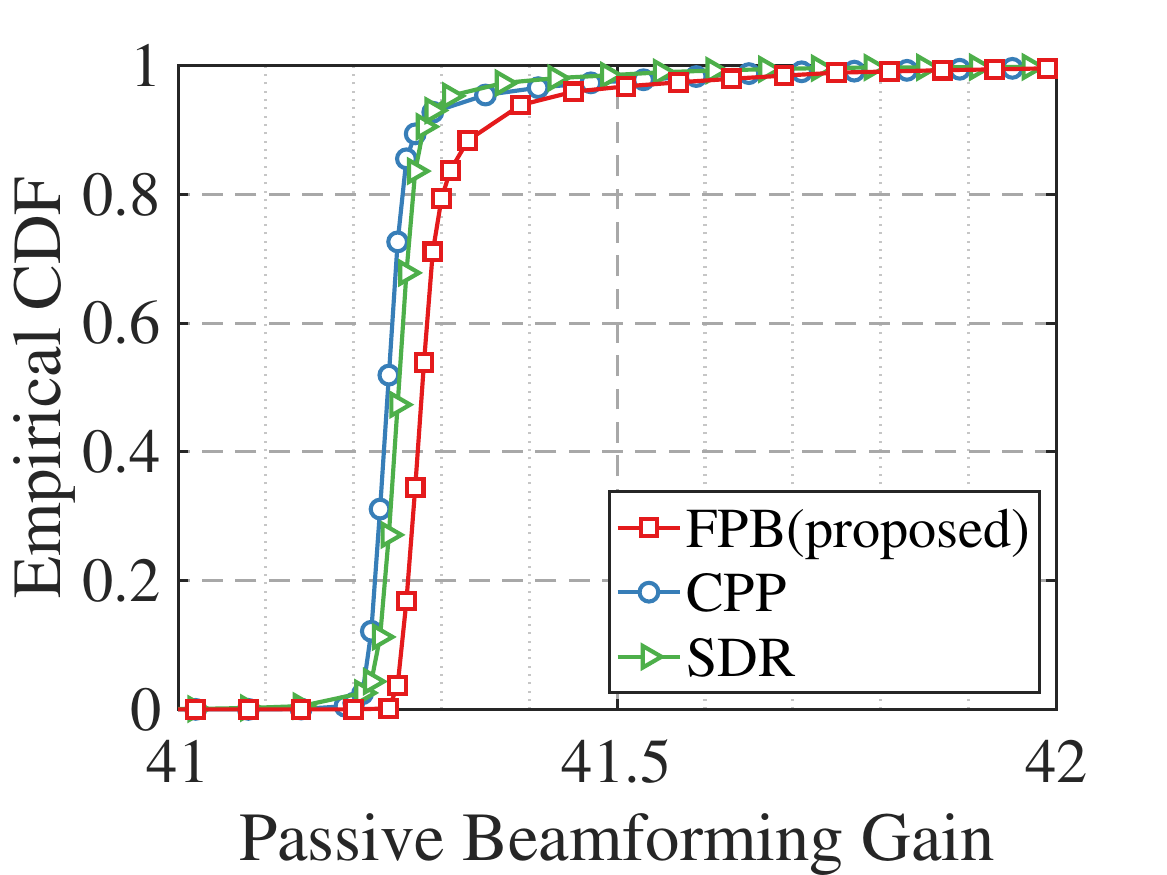}}
	\caption{Empirical CDF of the passive beamforming gain $10 \log_{10}F(\boldsymbol{\psi}^* )$ in dB.}
	\label{fig: bf_cdf_M}
\end{figure}

\begin{figure}[!t]
	\centering
	\subfloat[$b=1$\label{fig: bf_time_M1}]{
		\includegraphics[width=1.3in]{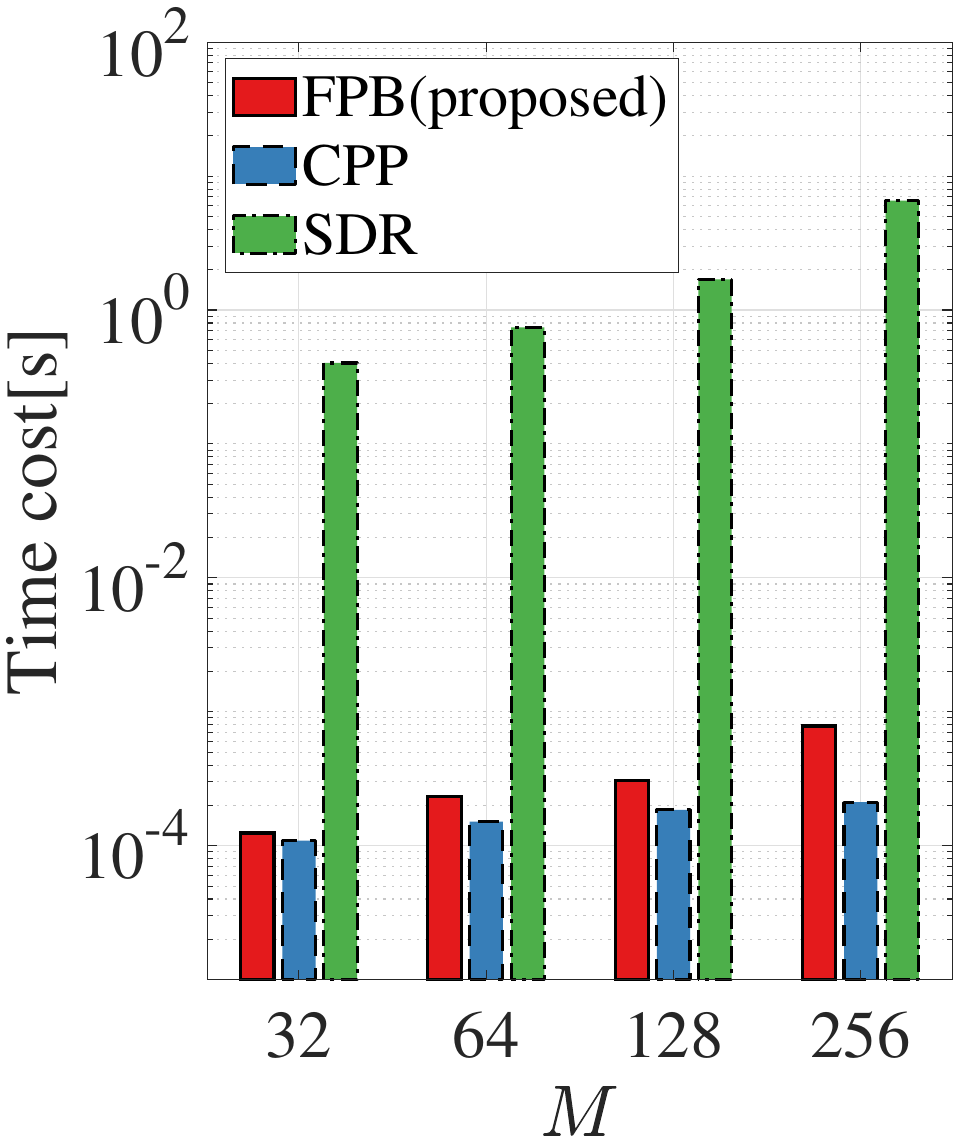}}
	\subfloat[$b=2$\label{fig: bf_time_M2}]{
		\includegraphics[width=1.3in]{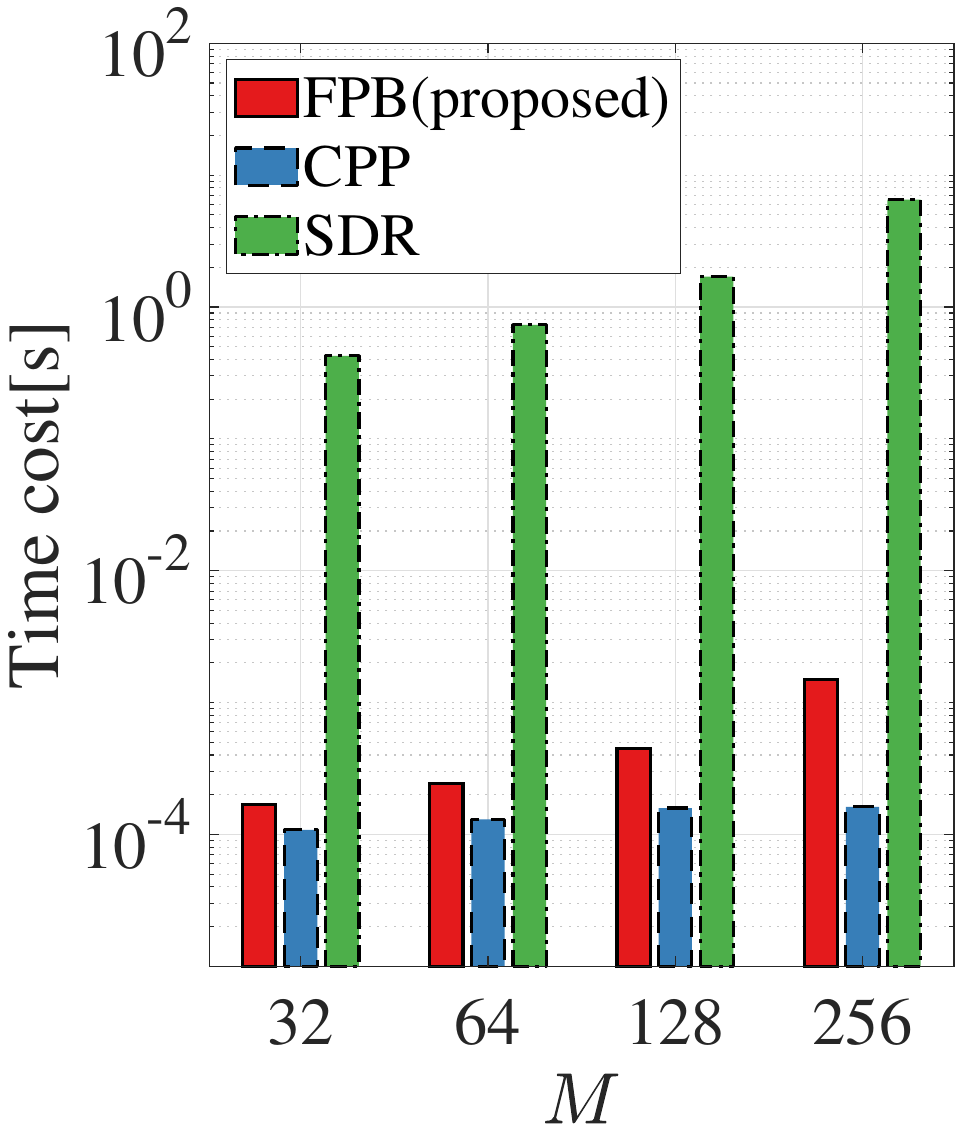}}
	\caption{Time cost of different passive beamforming algorithms.}
	\label{fig: bf_time_M}
\end{figure}

To evaluate the performance of the proposed FPB algorithm that optimally  solves the discrete beamforming problem \eqref{eqn: beamforming_problem}, we 
introduce the following two widely used methods as benchmarks:
\begin{itemize}	
	\item \emph{CPP method}\cite{wuBeamformingOptimizationWireless2020,youChannelEstimationPassive2020}:
	This method finds the RIS reflection coefficient for each segment $\ell=1,2,\ldots,L$  in the discrete set $\mathcal{F}^{M/L}$ that is closest to the end-to-end channel response $\bg_\ell^*$.
	\item \emph{SDR method}\cite{wuIntelligentReflectingSurface2019b,linReconfigurableIntelligentSurfaces2021}: 
	This method relaxes the original problem \eqref{eqn: beamforming_problem} to a continuous and rank-one-constrained semidefinite programming problem (SDP). The RIS reflection coefficient 
 for each segment $\ell=1,2,\ldots, L$  is the one in  $\mathcal{F}^{M/L}$ closest to the optimal solution of the SDP problem.
\end{itemize}

Suppose UEs are randomly distributed in the area of interest $\cM$. 
 For each UE position, we compute the passive beamforming gain $10 \log_{10}F(\boldsymbol{\psi}^\star )$,  as defined in \eqref{eqn: beamforming_problem}, and 
$\boldsymbol{\psi}^\star$ is the RIS reflection coefficient yielded by different RIS beamforming methods. 
Fig.~\ref{fig: bf_cdf_M} shows the empirical cumulative distribution function (CDF) of the passive beamforming gain $10 \log_{10}F(\boldsymbol{\psi}^\star )$ computed under different RIS size, i.e., $M\=64$ and $M\=256$, and different RIS phase bit, i.e., $b\=1$ and $b\=2$.  
The results of Fig.~\ref{fig: bf_cdf_M} demonstrate that the proposed FPB algorithm outperforms the CPP and SDR in terms of the passive beamforming gain, regardless of the RIS size and RIS phase bit. 
These results are consistent with the theoretical optimity of the FPB method, as guaranteed by Proposition~\ref{proposition: linear_complexity}.
Under the case $b\=1$, compared to the CPP method at the CDF $=\!80\%$, the FPB method achieves a gain of nearly $0.8$dB under the RIS scale $M\=64$ and about $0.2$dB gain under $M\=256$.
When the RIS bit number grows to $b\=2$, as shown in Fig.~\ref{fig: bf_cdf_b2_M64} and Fig.~\ref{fig: bf_cdf_b2_M256}, the gain brought by the proposed FPB method at CDF $=\!80\%$  decreases to be  $0.3$dB under the RIS scale $M\=64$ and less than $0.1$dB under $M\=256$.
This indicates that as the number of RIS elements increases, the performance of the CPP algorithm becomes closer to that of the proposed FPB method.
Since the FPB method yields the optimal solution, our work shows that the sub-optimal solution provided by the CPP algorithm can serve as a good approximation to the optimum when $b$ is large.

Fig.~\ref{fig: bf_time_M} compares the running time of the proposed FPB method and other benchmark methods under various RIS element numbers $M\=32, 64, 128, 256$.
Different bit number of RIS phase $b\=1,2$ is considered in Fig.~\ref{fig: bf_time_M1} and Fig.~\ref{fig: bf_time_M2}, respectively.
It is evident that the FPB and the CPP method exhibit superior time efficiency over the SDR method.
As shown in Fig.~\ref{fig: bf_time_M}, the FPB and the CPP methods take only a few milliseconds to operate the passive beamforming procedure, while the SDR method incurs an unacceptably high computational cost.
As the number of RIS elements $M$ increases from $M\=32$ to $M\=256$, the computational time of the CPP method linearly increases while the proposed FPB algorithm has a quadratic growth in time consumption.
The quadratic computational complexity of the FPB algorithm arises from the product of its linear search time in \eqref{prob: omega_beamforming2} and the linear complexity of computing the objective function value in \eqref{eqn: beamforming_problem}, which is consistent with the conclusion of Proposition~\ref{proposition: linear_complexity}. 
These results confirm the time efficiency of the proposed FPB algorithm, thus demonstrating the effectiveness of the proposed passive beamforming module.

Fig.~\ref{fig:peb_designed} shows the computed $\text{CRLB}(\bp)$ using (\ref{crlb:lower_bound_pos}) at $\text{SNR}\=6$dB and $M\=256$ for each position $\bp \!\in\! \cM$.
Specifically, Fig.~\ref{fig:peb_designed1} and Fig.~\ref{fig:peb_designed2} compare the $\text{CRLB}(\bp)$ obtained under the random RIS phases and the designed RIS phases by the FPB algorithm, respectively.
It is clear that the $\text{CRLB}(\bp)$ in Fig.~\ref{fig:peb_designed1}
is generally larger than the counterpart in Fig.~\ref{fig:peb_designed2}. By designing the RIS coefficients using the FPB algorithm, the $\text{CRLB}(\bp)$ is reduced from $1$ decimeter in Fig.~\ref{fig:peb_designed1} to less than $1$ centimeter in Fig.~\ref{fig:peb_designed2}. This means properly designing the RIS phase is important in harnessing the RIS benefit and improving the localization performance.
Moreover, as the position $\bp$ moves from the top left to the bottom right of the area $\cM$, the corresponding $\text{CRLB}(\bp)$ grows due to the increased distance between the UE and  RIS.

\begin{figure}[!t]
	\centering
	\subfloat[\label{fig:peb_designed1}]{
		\includegraphics[width=1.6in]{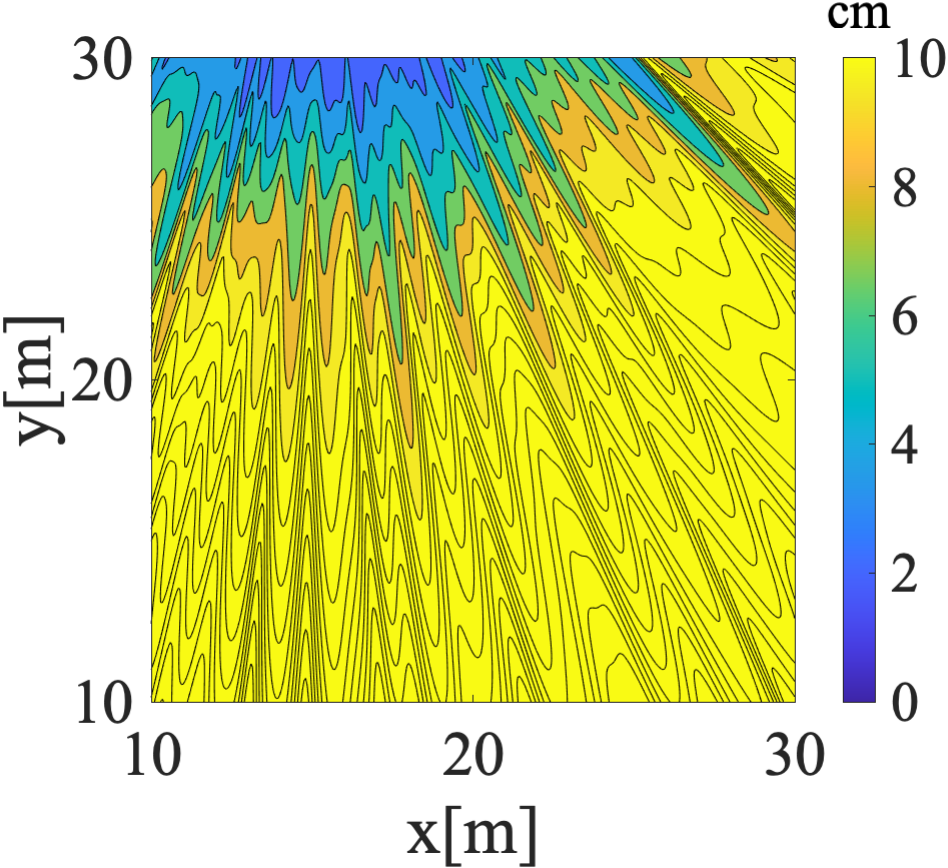}}
	\subfloat[\label{fig:peb_designed2}]{
		\includegraphics[width=1.6in]{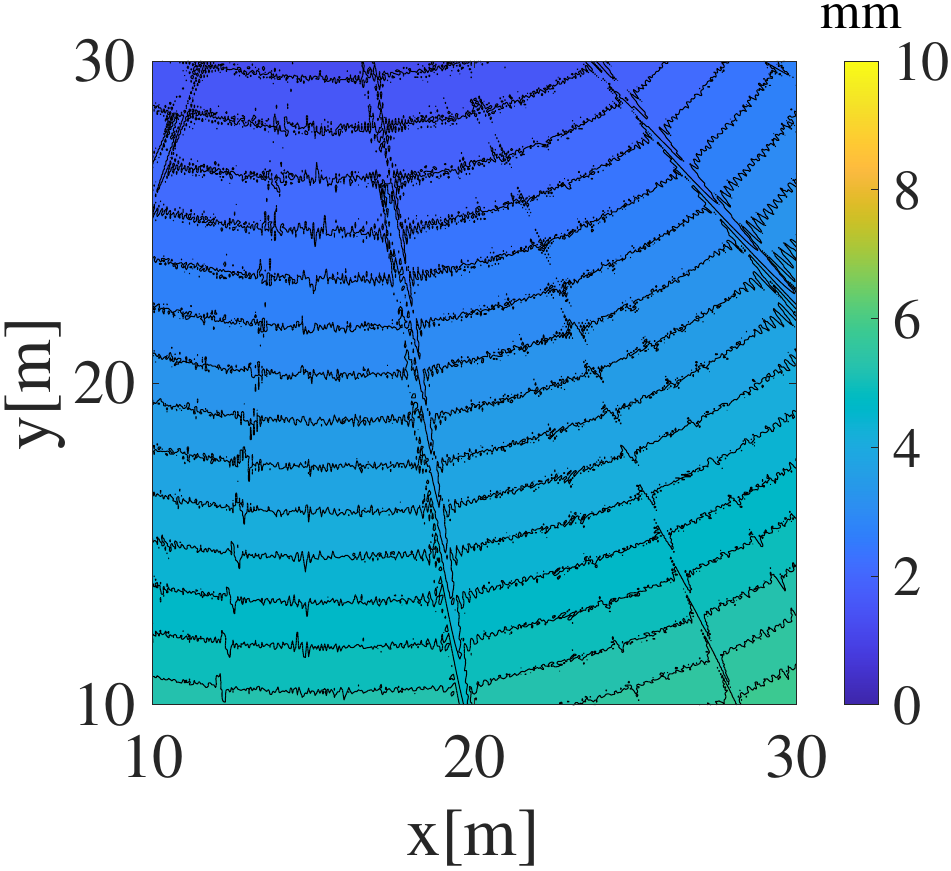}}
	\caption{$\text{CRLB}(\bp)$ under SNR $=\!6\text{dB}$ and: (a)Random RIS phases, (b)Designed RIS phases by FPB.}
	\label{fig:peb_designed}
\end{figure}

\subsection{Simulation Results: Coarse-to-Fine Localization}\label{subsec: result_loc}

\begin{figure}
	\centering
	\includegraphics[width=2in]{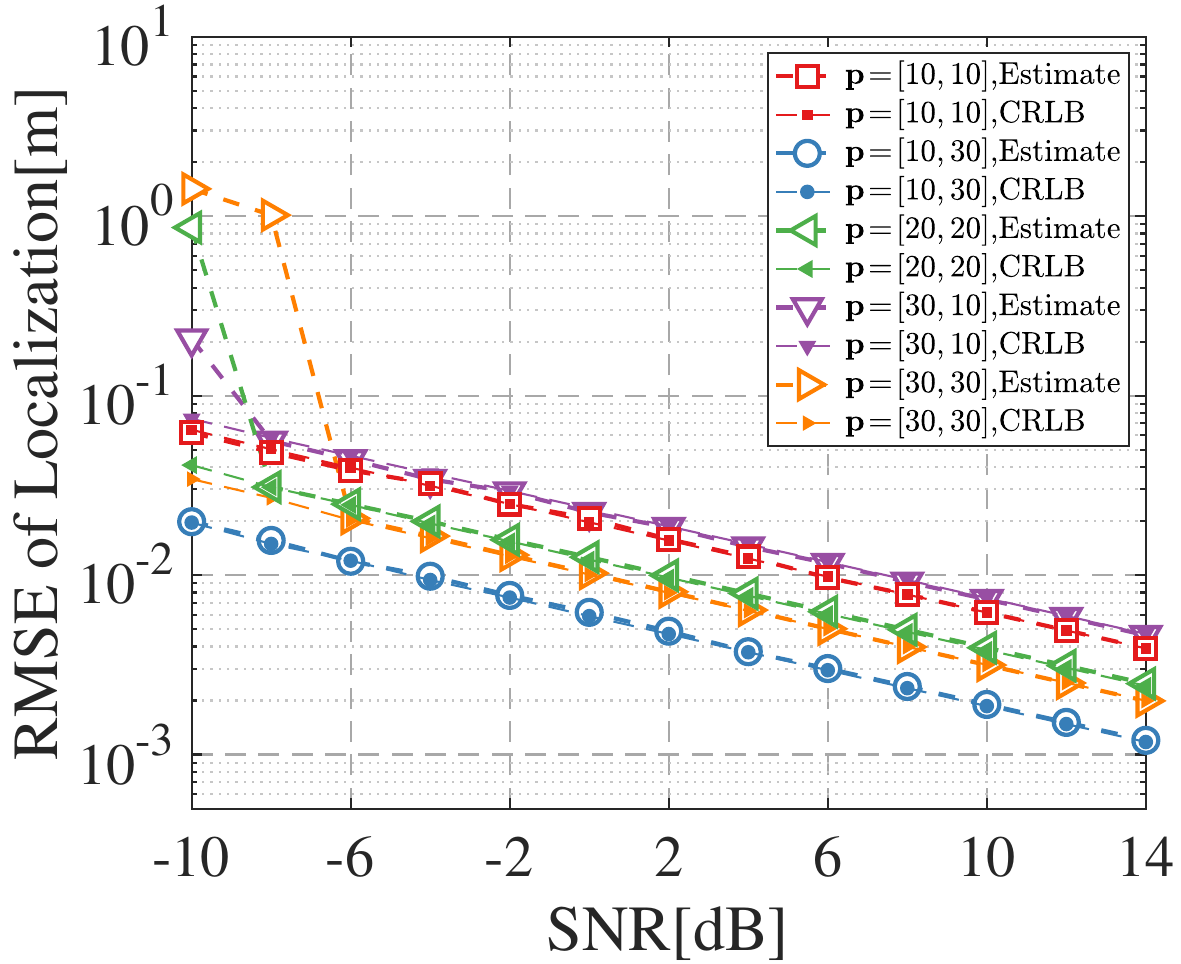}
	\caption{RMSE of the localization error.}
	\label{fig:fine_rmse_5points}
\end{figure}

\begin{figure}
	\centering
	\includegraphics[width=2in]{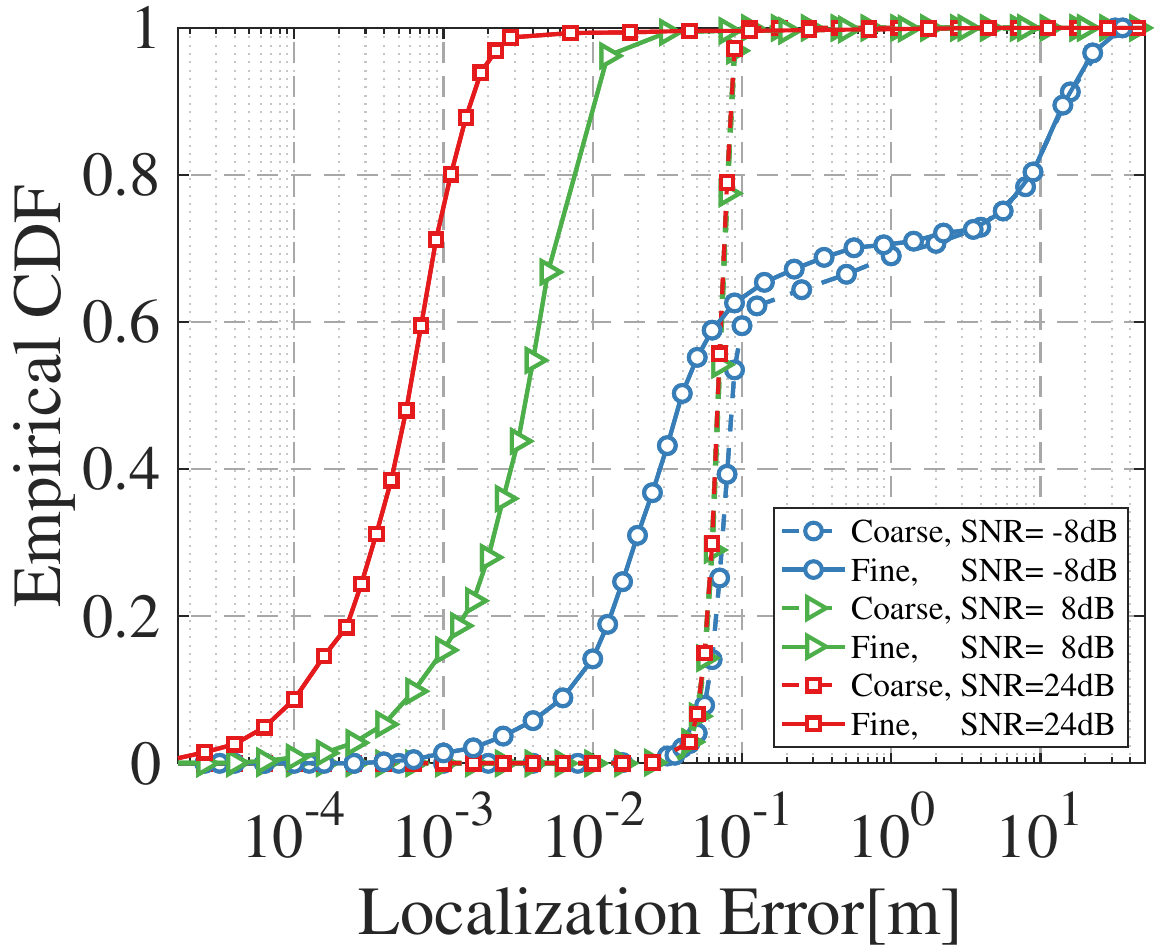}
	\caption{Empirical CDF of the coarse-to-fine localization error.}
	\label{fig:fine_cdf}
\end{figure}

\begin{figure}
	\centering
	\includegraphics[width=2in]{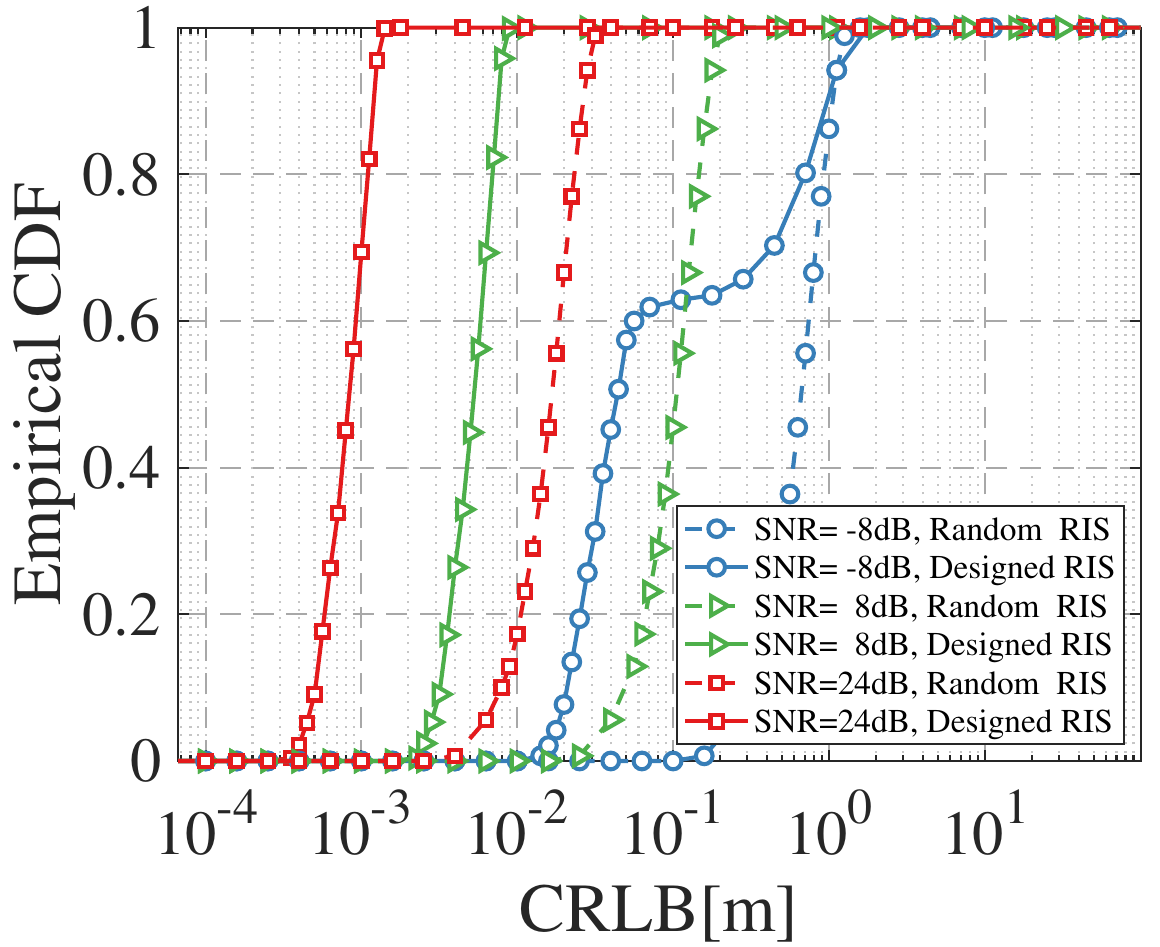}
	\caption{Empirical CDF of the $\text{CRLB}(\bp)$ over the entire area of interest.}
	\label{fig: result_crlb}
\end{figure}

In this subsection, we numerically validate the performance of the proposed coarse-to-fine localization algorithm.
At each  true position $\bp$, the localization error $\|\hat{\bp} - \bp\|_2$ and  RMSE $\sqrt{\mathbb{E}\|\hat{\bp} - \bp\|_2^2}$ 
are used as metrics to evaluate the localization performance, where $\hat{\bp}$ refers to the estimated UE position. 
The following simulation considers a general RIS setting with $M\=256$ and $b\=2$.

To demonstrate the gap between the RMSE $\sqrt{\mathbb{E}\|\hat{\bp} - \bp\|_2^2}$ and its theoretical bound $\text{CRLB}(\bp)$ in \eqref{crlb:lower_bound_pos}, Fig.~\ref{fig:fine_rmse_5points} compares the 
RMSE $\sqrt{\mathbb{E}\|\hat{\bp} - \bp\|_2^2}$ and the corresponding $\text{CRLB}(\bp)$ 
at five specific points under different SNRs. 
The five points include  four corner points and one center point of the area of interest $\cM$, i.e., $\bp\= (10,10)$m, $(10,30)$m, $(30,30)$m, $(30,10)$m, and $(20,20)$m.
It is observed that the proposed coarse-to-fine localization result approaches the CRLB when $\text{SNR}\!\geq\! -6$dB. 
The high positioning accuracy in a relatively low SNR, i.e., $\text{SNR} \=-6$dB is achieved due to the high RIS beamforming gain illustrated in Fig.~\ref{fig: bf_cdf_M}.

To evaluate the coarse and fine localization performance over the entire area of interest $\cM$, Fig.~\ref{fig:fine_cdf} shows the empirical CDF of the coarse and the fine localization errors.
Specifically, under varying SNR=$-8, 8, 24$dB, we consider $1000$ independent UEs uniformly selected from the $\cM$ and estimate these UEs' positions by the proposed coarse-to-fine localization scheme.
In the case of low SNR ($-8$dB), we observe that around $40\%$ of the UEs have a  coarse localization error exceeding $1$ decimeter, and there is almost no improvement in the fine localization procedure. 
In contrast, the fine localization module can effectively refine the coarse estimates for the remaining $60\%$ UEs whose coarse localization errors are less than $1$ decimeter. 
This is because an accurate coarse localization result provides a good initialization for the gradient-based fine localization module.
As shown in Fig.~\ref{fig:fine_cdf}, the error of coarse localization remains unchanged 
when the SNR increases higher than $8$dB.
This phenomenon arises due to the consistent resolution for extracting the ToA $\hat{\tau}$ and AoA $\hat{f}$ from problems \eqref{pro:toa_fft} and \eqref{pro:aoa_fft} when the SNR is sufficiently high.
In the case of median SNR ($8$dB), more than $90\%$ of the UEs exhibit a localization error of less than $1$cm.
Besides, over $80\%$ of the UEs exhibit a localization error of less than $1$ millimeter under high SNR ($24$dB), which validates the high accuracy of the proposed coarse-to-fine localization protocol.

To further demonstrate the estimation performance of the proposed localization algorithm, Fig.~\ref{fig: result_crlb} shows the CDF of the corresponding $\text{CRLB}(\bp)$ for each sampled UE $\bp$ over the area of interest $\cM$.
At  SNR=$-8$dB, the localization accuracy of the coarse estimation model is low in Fig.~\ref{fig:fine_cdf}. As a result, 
the passive beamforming module takes the inaccurate coarse localization results as the input and fails to optimize the RIS phases, leading to no significant improvement in the $\text{CRLB}(\bp)$.
However, as the SNR increases to $8$dB and the coarse localization accuracy improves, the $\text{CRLB}(\bp)$ with designed RIS phases is substantially lower than that with random RIS phases.
At median SNR ($8$dB), the $\text{CRLB}(\bp)$ is approximately at the centimeter level, while at high SNR ($24$dB), the $\text{CRLB}(\bp)$ can be reduced from centimeter level to millimeter level.
The theoretical results in Fig.~\ref{fig: result_crlb} are consistent with the numerical simulation in Fig.~\ref{fig:fine_cdf}, suggesting that the proposed two-stage localization algorithm has achieved near-optimal localization accuracy.


\section{Conclusions} \label{Conclusions}
In this paper, a large-scale RIS is used to aid UE localization in a near-field mmWave MIMO-OFDM system.
We proposed a partitioned-far-field architecture to address the spatial non-stationarity of the near-field localization and the scalability issue of discrete beamforming.
Based on this architecture, we developed a near-field localization protocol consisting of the balanced signaling module, the passive beamforming module, and the two-stage localization module.
The balanced signaling method separates LoS and NLoS components in the raw signal and formulates the localization problem.
The proposed linear-time FPB algorithm optimally solves the discrete beamforming problem and improves the theoretical localization performance.
By exploiting the channel sparsity of the proposed partitioned-far-field representation, we also devised a two-stage coarse-to-fine localization algorithm to estimate the UE's position. 
Simulation results demonstrated the efficiency of the proposed discrete beamforming and coarse-to-fine localization algorithms, with localization error reduced from decimeter-level to centimeter-level. 
Overall, our approach has provided a low-complexity and high-precision protocol for RIS-assisted near-field localization.



\begin{appendices}

\section{Proof of Proposition \ref{proposition: crlb_and_powergian}} \label{appendix-C}
Let $\kappa(\bI(\bp ))$ be the condition number of FIM $\bI(\bp )$ in \eqref{eqn:FIM_2} and define $M_0 := \max_{\bp } \kappa(\bI(\bp ))$.
Then, using the closed form of $ \tilde{\dY}_{n,t,:}$ and $\bz_{n,\ell}$ from \eqref{eqn: z_n}, we have
{\footnotesize 
\begin{multline} 
  \tr\left\lbrace \bI^{-1}\left(\bp \right) \right\rbrace 
 \overset{(a)}{\le}   2 \sigma^2 M_0\left[\sum_{n=1}^{N}\sum_{t=1}^{T} \sum_{d=1}^2 \left\| \frac{\partial \tilde{\dY}_{n,t,:} }{\partial p_d}\right\|_2^2\right]^{-1}     \\
 =  2 \sigma^2 M_0\left[\sum_{n=1}^{N}\sum_{t=1}^{T} \sum_{d=1}^2 \left\| \sum_{\ell = 1}^L   \left( \boldsymbol{\psi}_{t,\ell}^\rt \bg_{\ell}  \right)    \frac{\partial \bz_{n,\ell} }{\partial  p_d}  
\right.\right.    \\
  \left. \left. -  j   	\left[ \boldsymbol{\psi}_{t,\ell}^\rt \left(   \bb_{\frac{M}{L}}   \odot  \bg_{\ell} \right)    \right] 
\left( \pi \frac{\partial  \cos \theta^\text{R,M}_\ell   }{\partial p_d}   \bz_{n,\ell} \right)
\right\|_2^2\right]^{-1}    \\
 \overset{(b)}{\le}
    2 \sigma^2 M_0\left[\sum_{n=1}^{N}\sum_{t=1}^{T} \sum_{d=1}^2 \left\| \sum_{\ell = 1}^L   \left( \boldsymbol{\psi}_{t,\ell}^\rt \bg_{\ell}  \right)  \frac{\partial \bz_{n,\ell} }{\partial  p_d}
\right\|_2^2\right]^{-1},
\label{eqn: prop3_ineq1}
\end{multline}
}%
where $\bb_{N} \= [0,1,\ldots,N-1 ]^\rt $.
The inequality (a) holds because $\tr\lbrace \bI(\bp) \rbrace \tr\lbrace \bI^{-1} (\bp )\rbrace \= \sum_{d=1}^2 \lambda_d \sum_{d=1}^2\lambda_d^{-1}\!\le\! 2+2 \kappa(\bI(\bp )) \le 4 M_0$ and $\{\lambda_d\}_d$ are the eigenvalues of $\bI (\bp )$.
The last inequality in (b) relies on the fact that $  \|  z_1 + z_2  \| _2 \ge  \left\|z_1\right\|_2$ holds for $A(z_1,z_2)\le \pi/2$.
Moreover, we define $\epsilon_0 \= \min_{\bp} \| \boldsymbol{\psi}_{t,\ell}^\rt \bg_{\ell}   \|_2$ where $\epsilon_0 \!>\! 0$ holds due to the constraint \eqref{eqn: crlb_power_mild}. 
Using the reversed version of Cauchy-Schwarz inequality and the observation that $\epsilon_0^2 \le \| \boldsymbol{\psi}_{t,\ell}^\rt \bg_{\ell} \|_2^2 \le M/L$, we have
\begin{equation}\footnotesize
\begin{aligned}
 & \left\|  \sum_{\ell = 1}^L   \left( \boldsymbol{\psi}_{t,\ell}^\rt \bg_{\ell}  \right)  \frac{\partial \bz_{n,\ell} }{\partial  p_d} \right\|_2^2  \ge 4\left( m_{n,t,d}^{\frac{1}{2}}+ m_{n,t,d}^{-\frac{1}{2}}\right) ^{-2} \\   
&  \times \left(   \sum_{\ell = 1}^L   \left\| \frac{\partial \bz_{n,\ell} }{\partial  p_d}   \right\|_2^2 \right) \left( \sum_{\ell = 1}^L   \left\| \boldsymbol{\psi}_{t,\ell}^\rt \bg_{\ell}    \right\|_2^2\right) \treq C_1\sum_{\ell = 1}^L   \left\| \boldsymbol{\psi}_{t,\ell}^\rt \bg_{\ell}    \right\|_2^2 ,
\end{aligned}\label{eqn: prop3_ineq2} 
\end{equation}
where $m_{n,t,d} \= \epsilon_0\sqrt{L/M} (\min_{\ell} \| \frac{\partial \bz_{n,\ell} }{\partial  p_d}  \|_2) (\max_{\ell} \| \frac{\partial \bz_{n,\ell} }{\partial  p_d} \|_2)^{-1}$ and $m_{n,t,d}\!>\!0$ is also guaranteed by the constraint \eqref{eqn: crlb_power_mild}.
In \eqref{eqn: prop3_ineq2}, $C_1\!>\!0$ is a constant independent of the RIS phase control $\{\boldsymbol{\psi}_{t,\ell}\}_{t,\ell}$ and UE's position $\bp$.
By combining the results of \eqref{eqn: prop3_ineq1} and \eqref{eqn: prop3_ineq2}, the proof is completed.

\qed

\section{Proof of Lemma \ref{lemma: opt_condition}} \label{appendix-D}
We prove this lemma by contradiction. 
Let $\{\boldsymbol \psi_{\ell}^\star\}_{\ell=1}^L$ be the optimal solution of problem \eqref{eqn: beamforming_problem}.
We assume that there exists a partition $\{S_1,S_2\}$ of $\{1,2,\ldots,M/L\}$ and a $\ell_0 \in \{1,2,\ldots,L\}$ such that
\small\begin{equation*}
 A\bigg( \sum_{k \in S_1}\psi^\star_{\ell_0,k} g_{\ell,k} ,\sum_{k \in S_2} \psi^\star_{\ell_0,k} g_{\ell,k} \bigg) > \frac{\pi}{2^b}.
\end{equation*}\normalsize

Given any $I_1,I_2 \in \cC$, note that the collection of all points in $\{\psi I_2: \psi\in\mathcal{F} \}$ (namely, all the candidate rotation of $I_2$) is uniformly located on a circle with radius $\left| I_2 \right|$, where the angle between two adjacent points is exactly equal to $\pi/2^b$.
Therefore, there must exist a possible rotation $\psi_0 \in \mathcal{F}$ satisfying that the angle between $\psi_0 I_2$ and $I_1$ is no more than $\pi/2^b$.
Specifically, let $I_1\= \sum_{k \in S_1}\psi^\star_{\ell_0,k} g_{\ell,k}$ and $I_2\= \sum_{k \in S_2} \psi^\star_{\ell_0,k} g_{\ell,k}$, and we have
\small\begin{equation*}
 A\bigg( \sum_{k \in S_1}\psi^\star_{\ell_0,k} g_{\ell,k} ,\sum_{k \in S_2} \psi_0\psi^\star_{\ell_0,k} g_{\ell,k} \bigg) \le \frac{\pi}{2^b}.
\end{equation*}\normalsize
where $\psi_0\psi^\star_{\ell_0,k} \in \mathcal{F}$ because $\mathcal{F}$ is a closed set under multiplication.

In what follows, we use this $\psi_0$ to construct a new solution $\{\boldsymbol \psi_{\ell} \}_{\ell=1}^L$ that has a smaller object value than $\{\boldsymbol \psi_{\ell}^\star\}_{\ell=1}^L$, which leads to a contradiction.
We consider a different RIS phase design $\{\boldsymbol \psi_{\ell} \}_{\ell=1}^L \neq \{\boldsymbol \psi^\star_{\ell} \}_{\ell=1}^L$ such that$\boldsymbol \psi_{\ell_0, k} \=\psi_0 \boldsymbol \psi^\star_{\ell_0, k} $ for $k \in S_1$, $\boldsymbol \psi_{\ell_0, k} \= \boldsymbol \psi^\star_{\ell_0, k} $ for $k \in S_2$ and $\boldsymbol \psi_{\ell} \=\boldsymbol \psi^\star_{\ell} $ for $\ell \neq \ell_0$.
A straightforward comparison reveals that
\begin{equation*}\small
\begin{aligned}
 &\sum_{\ell=1}^L \left|   \boldsymbol  \psi_\ell^\rt \bg_\ell \right|^2 - \sum_{\ell=1}^L \left|   {(\boldsymbol  \psi^\star_\ell)}^\rt \bg_\ell \right|^2 = \small \left|   I_1 + \psi_0 I_2 \right|^2 - \left |  I_1 + I_2 \right|^2 \\ 
&= 2 \left| I_1 \right|  \left|I_2  \right| \left[   \cos\left(A\left( I_1 ,\psi_0 I_2 \right) \right) - \cos\left(A\left( I_1 , I_2 \right) \right) \right]   > 0.
\end{aligned}
\end{equation*}
The last inequality holds because $\cos\left(A\left( I_1 ,\psi_0 I_2 \right) \right) \!\ge\! \cos \frac{\pi}{2^b} \!>\! \cos\left(A\left( I_1 ,I_2 \right) \right)$. 
Then, $\small \sum_{\ell=1}^L |   \boldsymbol  \psi_\ell^\rt \bg_\ell |^2 \!>\! \sum_{\ell=1}^L |   {(\boldsymbol  \psi^\star_\ell)}^\rt \bg_\ell |^2$ indicates that $\{\bpsi_{\ell}^\star\}_{\ell=1}^L$ is not optimal, which is a contradiction and completes the proof.
\qed

\section{Proof of Lemma \ref{lemma: equivalence_omega_beamformer}} \label{appendix-E}
For ease of exposition, we define the following function
\small\begin{equation*}
~~~\hat{\psi}_{\ell,k}(\omega)
= \exp \left\lbrace j\frac{2\pi}{2^b} \mathrm{round} \left(-\frac{2^b}{2\pi}\mathrm{arg}(g_{\ell,k}) + \frac{2^b}{2\pi}\omega  \right) \right\rbrace,
\end{equation*}\normalsize
where $\omega\in[0,2\pi)$ and $g_{\ell,k}$ is the $k$-th element of the the combined channel response $\bg_\ell$.
Our objective is to prove that ${\psi}_{\ell,k} = \hat{\psi}_{\ell,k}(\omega_\ell)$ holds for all $\ell,k$.
This can be realized by estimating the upper bound of $A ( {\psi}_{\ell,k} ,  \hat{\psi}_{\ell,k}(\omega_\ell) )$ and showing that $A ( {\psi}_{\ell,k} ,  \hat{\psi}_{\ell,k}(\omega_\ell) )\= 0$.

As $\boldsymbol \psi_{\ell}$ satisfies the constraint \eqref{eqn: opt_condition} in Lemma \ref{lemma: opt_condition}, we have
\small\begin{equation*}
A ( {\psi}_{\ell,k} g_{\ell,k}, \sum_{m} {\psi}_{\ell,m} g_{\ell,m}  ) \!<\! A ( {\psi}_{\ell,k} g_{\ell,k}, \sum_{m\neq k } {\psi}_{\ell,m} g_{\ell,m}  ) \!\le\! \frac{\pi}{2^b}.
\end{equation*}\normalsize
On the other hand,  
$A(\hat{\psi}_{\ell,k}(\omega) g_{\ell,k},\exp(j\omega) \le {\pi} / {2^b}$ holds for all $\omega \in [0,2\pi)$ because the inequality $|  \mathrm{round}(x) - x  | \le 1/2$ holds for $x>0$.
Note that $\omega_\ell = \mathrm{arg}(\sum_{m} {\psi}_{\ell,m} g_{\ell,m})$, so we can finally estimate the value of $A ( {\psi}_{\ell,k} ,  \hat{\psi}_{\ell,k}(\omega_\ell))$ by
\small\begin{align*}
&   A  ( {\psi}_{\ell,k} ,  \hat{\psi}_{\ell,k}(\omega_\ell)  )  = A  ( {\psi}_{\ell,k} g_{\ell,k} ,  \hat{\psi}_{\ell,k}(\omega_\ell) g_{\ell,k} ) \\ 
\le&    A  ( {\psi}_{\ell,k} g_{\ell,k} ,  \sum_{m} {\psi}_{\ell,m} g_{\ell,m}   ) + A (  \sum_{m} {\psi}_{\ell,m} g_{\ell,m} ,\hat{\psi}_{\ell,k}(\omega_\ell) g_{\ell,k} )\\ 
\le&   A  ( {\psi}_{\ell,k} g_{\ell,k} ,  \sum_{m} {\psi}_{\ell,m} g_{\ell,m}   ) + A ( \exp(j\omega_\ell) ,\hat{\psi}_{\ell,k}(\omega_\ell) g_{\ell,k} )\\ 
< &   \frac{\pi}{2^b} + \frac{\pi}{2^b} \le \frac{2\pi}{2^b}.
\end{align*}\normalsize
This implies that $A ( {\psi}_{\ell,k} ,  \hat{\psi}_{\ell,k}(\omega_\ell) ) \= 0$ as $ {\psi}_{\ell,k}$ and $\hat{\psi}_{\ell,k}(\omega_\ell)$ are multiples of $\exp\left(j {2\pi} / {2^b} \right)$, which completes the proof.
\qed

\section{Proof of Proposition \ref{proposition: linear_complexity}} \label{appendix-F}
According to the result of Lemma~\ref{lemma: opt_condition} and Lemma~\ref{lemma: equivalence_omega_beamformer}, the  optimal solution $\{\boldsymbol{\psi}_\ell^\star\}_{\ell=1}^L$ of problem \eqref{eqn: beamforming_problem} is contained in the $\omega$-beamformer set $\{{\boldsymbol \psi}_{\ell}(\omega),\omega \!\in\! [0,2\pi)\}$. 
In what follows, we show how to solve the problem \eqref{prob: omega_beamforming} by searching for $\{ \omega^\star_\ell \}_{\ell=1}^L$ with at most $\mathcal{O}(M)$ complexity.

Note that $ { \boldsymbol  \psi}_{\ell}^\rt(\omega) \bg_\ell = \sum_{k=1}^{M/L} { \psi}_{\ell,k}(\omega)g_{\ell,k} $ is a step function with respect to $\omega$, i.e., piece-wise constant. 
Therefore, in order to find the maximum of function $\| { \boldsymbol  \psi}_{\ell}^\rt(\omega) \bg_\ell\|^2 \=|\sum_{k=1}^{M/L} { \psi}_{\ell,k}(\omega)g_{\ell,k}|^2 $, we only need to evaluate those $\omega$s that are discontinuous points of ${ \psi}_{\ell,k}(\omega)$. 
Since the set of all discontinuous points of function $\mathrm{round}(\cdot)$ is $\{m + \frac{1}{2},m\in\cZ\}$, we search for the discontinuous points of ${ \psi}_{\ell,k}(\omega)$ by solving the following inequality
\small\begin{eqnarray*}
 ~~-\frac{2^b}{2\pi}\mathrm{arg}(g_{\ell,k}) + \frac{2^b}{2\pi}\omega = m_{\ell,k}+\frac{1}{2} , 0 \le \omega < 2 \pi,m_{\ell,k}\in \cZ.
\end{eqnarray*}\normalsize
Here, the solution is $\omega = \mathrm{arg}(g_{\ell,k}) + 2^{-b}(2\pi m_{\ell,k} + \pi)$, and the integer $m_{\ell,k}$ satisfies
\small\begin{eqnarray*}
~-\frac{2^b \cdot \mathrm{arg}(g_{\ell,k})}{2 \pi} - \frac{1}{2} \le m_{\ell,k} < -\frac{2^b \cdot \mathrm{arg}(g_{\ell,k})}{2 \pi} - \frac{1}{2} + 2^b,
\end{eqnarray*}\normalsize
which completes the proof.
\qed

\end{appendices}

\bibliographystyle{IEEEtran}
\bibliography{myRef}

\begin{thebibliography}{10}
\providecommand{\url}[1]{#1}
\csname url@samestyle\endcsname
\providecommand{\newblock}{\relax}
\providecommand{\bibinfo}[2]{#2}
\providecommand{\BIBentrySTDinterwordspacing}{\spaceskip=0pt\relax}
\providecommand{\BIBentryALTinterwordstretchfactor}{4}
\providecommand{\BIBentryALTinterwordspacing}{\spaceskip=\fontdimen2\font plus
\BIBentryALTinterwordstretchfactor\fontdimen3\font minus
  \fontdimen4\font\relax}
\providecommand{\BIBforeignlanguage}[2]{{%
\expandafter\ifx\csname l@#1\endcsname\relax
\typeout{** WARNING: IEEEtran.bst: No hyphenation pattern has been}%
\typeout{** loaded for the language `#1'. Using the pattern for}%
\typeout{** the default language instead.}%
\else
\language=\csname l@#1\endcsname
\fi
#2}}
\providecommand{\BIBdecl}{\relax}
\BIBdecl

\bibitem{luoReconfigurableIntelligentSurface2022}
B.~Luo, M.~Dong, H.~Wu, Y.~Li, L.~Yang, X.~Chen, and B.~Bai, ``Reconfigurable
  {{Intelligent Surface Assisted Millimeter Wave Indoor Localization
  Systems}},'' in \emph{{{ICC}} 2022 - {{IEEE International Conference}} on
  {{Communications}}}, 2022, pp. 4535--4540.

\bibitem{gao2021mimo}
F.~Gao, B.~Wang, C.~Xing, J.~An, and G.~Y. Li, ``Wideband {Beamforming} for
  {Hybrid} {Massive} {MIMO} {Terahertz} {Communications},'' \emph{IEEE Journal
  on Selected Areas in Communications}, vol.~39, no.~6, pp. 1725--1740, 2021.

\bibitem{gao2023integrated}
F.~Gao, L.~Xu, and S.~Ma, ``{Integrated Sensing and Communications With Joint
  Beam-Squint and Beam-Split for {mmWave/THz} Massive {MIMO}},'' \emph{IEEE
  Transactions on Communications}, vol.~71, no.~5, pp. 2963--2976, 2023.

\bibitem{7999215}
J.~Zhao, F.~Gao, W.~Jia, S.~Zhang, S.~Jin, and H.~Lin, ``{Angle Domain Hybrid
  Precoding and Channel Tracking for Millimeter Wave Massive {MIMO}} systems,''
  \emph{IEEE Transactions on Wireless Communications}, vol.~16, no.~10, pp.
  6868--6880, 2017.

\bibitem{elzanaty6GHolographicLocalization2021a}
A.~Elzanaty, A.~Guerra, F.~Guidi, D.~Dardari, and M.-S. Alouini, ``Towards {{6G
  Holographic Localization}}: {{Enabling Technologies}} and {{Perspectives}},''
  \emph{arXiv:2103.12415 [cs, eess]}, Mar. 2021.

\bibitem{ditarantoLocationAwareCommunications5G2014}
R.~Di~Taranto, S.~Muppirisetty, R.~Raulefs, D.~Slock, T.~Svensson, and
  H.~Wymeersch, ``Location-{{Aware Communications}} for {{5G Networks}}:
  {{How}} location information can improve scalability, latency, and robustness
  of {{5G}},'' \emph{IEEE Signal Processing Magazine}, vol.~31, no.~6, pp.
  102--112, Nov. 2014.

\bibitem{direnzoSmartRadioEnvironments2020}
M.~Di~Renzo, A.~Zappone, M.~Debbah, M.-S. Alouini, C.~Yuen, J.~{de Rosny}, and
  S.~Tretyakov, ``Smart {{Radio Environments Empowered}} by {{Reconfigurable
  Intelligent Surfaces}}: {{How It Works}}, {{State}} of {{Research}}, and
  {{The Road Ahead}},'' \emph{IEEE Journal on Selected Areas in
  Communications}, vol.~38, no.~11, pp. 2450--2525, 2020.

\bibitem{bjornsonReconfigurableIntelligentSurfaces2022}
E.~Bj{\"o}rnson, H.~Wymeersch, B.~Matthiesen, P.~Popovski, L.~Sanguinetti, and
  E.~{de Carvalho}, ``Reconfigurable {{Intelligent Surfaces}}: {{A}} signal
  processing perspective with wireless applications,'' \emph{IEEE Signal
  Processing Magazine}, vol.~39, no.~2, pp. 135--158, Mar. 2022.

\bibitem{wuIntelligentReflectingSurfaceAided2021}
Q.~Wu, S.~Zhang, B.~Zheng, C.~You, and R.~Zhang, ``Intelligent {{Reflecting
  Surface-Aided Wireless Communications}}: {{A Tutorial}},'' \emph{IEEE
  Transactions on Communications}, vol.~69, no.~5, pp. 3313--3351, May 2021.

\bibitem{elzanatyReconfigurableIntelligentSurfaces2021}
A.~Elzanaty, A.~Guerra, F.~Guidi, and M.-S. Alouini, ``Reconfigurable
  {{Intelligent Surfaces}} for {{Localization}}: {{Position}} and {{Orientation
  Error Bounds}},'' \emph{IEEE Transactions on Signal Processing}, vol.~69, pp.
  5386--5402, 2021.

\bibitem{basarWirelessCommunicationsReconfigurable2019a}
E.~Basar, M.~Di~Renzo, J.~De~Rosny, M.~Debbah, M.-S. Alouini, and R.~Zhang,
  ``Wireless {{Communications Through Reconfigurable Intelligent Surfaces}},''
  \emph{IEEE Access}, vol.~7, pp. 116\,753--116\,773, 2019.

\bibitem{JiangGao2022RIS}
Y.~Jiang, F.~Gao, M.~Jian, S.~Zhang, and W.~Zhang, ``{Reconfigurable
  Intelligent Surface for Near Field Communications: Beamforming and
  Sensing},'' \emph{IEEE Transactions on Wireless Communications}, vol.~22,
  no.~5, pp. 3447--3459, 2023.

\bibitem{cuiChannelEstimationExtremely2022}
M.~Cui and L.~Dai, ``Channel {{Estimation}} for {{Extremely Large-Scale MIMO}}:
  {{Far-Field}} or {{Near-Field}}?'' \emph{IEEE Transactions on
  Communications}, vol.~70, no.~4, pp. 2663--2677, Apr. 2022.

\bibitem{alexandropoulosLocalizationMultipleReconfigurable2022a}
G.~C. Alexandropoulos, I.~Vinieratou, and H.~Wymeersch, ``Localization via
  {{Multiple Reconfigurable Intelligent Surfaces Equipped With Single Receive
  RF Chains}},'' \emph{IEEE Wireless Communications Letters}, vol.~11, no.~5,
  pp. 1072--1076, 2022.

\bibitem{wangJointBeamTraining2021}
W.~Wang and W.~Zhang, ``Joint {{Beam Training}} and {{Positioning}} for
  {{Intelligent Reflecting Surfaces Assisted Millimeter Wave
  Communications}},'' \emph{IEEE Transactions on Wireless Communications},
  vol.~20, no.~10, pp. 6282--6297, 2021.

\bibitem{fascistaRISaidedJointLocalization2022}
A.~Fascista, M.~F. Keskin, A.~Coluccia, H.~Wymeersch, and G.~{Seco-Granados},
  ``{{RIS-aided Joint Localization}} and {{Synchronization}} with a
  {{Single-Antenna Receiver}}: {{Beamforming Design}} and {{Low-Complexity
  Estimation}},'' \emph{IEEE Journal of Selected Topics in Signal Processing},
  pp. 1--1, 2022.

\bibitem{tengBayesianUserLocalization2022a}
B.~Teng, X.~Yuan, R.~Wang, and S.~Jin, ``Bayesian {{User Localization}} and
  {{Tracking}} for {{Reconfigurable Intelligent Surface Aided MIMO Systems}},''
  \emph{IEEE Journal of Selected Topics in Signal Processing}, pp. 1--1, 2022.

\bibitem{linChannelEstimationUser2021}
Y.~Lin, S.~Jin, M.~Matthaiou, and Y.~Xiaohu, ``Channel {{Estimation}} and
  {{User Localization}} for {{IRS-Assisted MIMO-OFDM Systems}},'' \emph{IEEE
  Transactions on Wireless Communications}, vol.~PP, Sep. 2021.

\bibitem{linReconfigurableIntelligentSurfaces2021}
S.~Lin, B.~Zheng, G.~C. Alexandropoulos, M.~Wen, M.~D. Renzo, and F.~Chen,
  ``Reconfigurable {{Intelligent Surfaces With Reflection Pattern Modulation}}:
  {{Beamforming Design}} and {{Performance Analysis}},'' \emph{IEEE
  Transactions on Wireless Communications}, vol.~20, no.~2, pp. 741--754, 2021.

\bibitem{9374451}
Z.~Wan, Z.~Gao, F.~Gao, M.~D. Renzo, and M.-S. Alouini, ``{Terahertz Massive
  {MIMO} With Holographic Reconfigurable Intelligent Surfaces},'' \emph{IEEE
  Transactions on Communications}, vol.~69, no.~7, pp. 4732--4750, 2021.

\bibitem{luoSpatialModulationRISAssisted2021}
S.~Luo, P.~Yang, Y.~Che, K.~Yang, K.~Wu, K.~C. Teh, and S.~Li, ``Spatial
  {{Modulation}} for {{RIS-Assisted Uplink Communication}}: {{Joint Power
  Allocation}} and {{Passive Beamforming Design}},'' \emph{IEEE Transactions on
  Communications}, vol.~69, no.~10, pp. 7017--7031, 2021.

\bibitem{rinchiCompressiveNearFieldLocalization2022}
O.~Rinchi, A.~Elzanaty, and M.-S. Alouini, ``Compressive {{Near-Field
  Localization}} for {{Multipath RIS-Aided Environments}},'' \emph{IEEE
  Communications Letters}, vol.~26, no.~6, pp. 1268--1272, Jun. 2022.

\bibitem{keykhosraviRISEnabledSISOLocalization2022}
K.~Keykhosravi, M.~F. Keskin, G.~{Seco-Granados}, P.~Popovski, and
  H.~Wymeersch, ``{{RIS-Enabled SISO Localization}} under {{User Mobility}} and
  {{Spatial-Wideband Effects}},'' \emph{IEEE Journal of Selected Topics in
  Signal Processing}, pp. 1--1, 2022.

\bibitem{hanLocalizationChannelReconstruction2022}
Y.~Han, S.~Jin, C.-K. Wen, and T.~Q. Quek, ``Localization and {{Channel
  Reconstruction}} for {{Extra Large RIS-Assisted Massive MIMO Systems}},''
  \emph{IEEE Journal of Selected Topics in Signal Processing}, pp. 1--1, 2022.

\bibitem{zhangMetaLocalizationReconfigurableIntelligent2021a}
H.~Zhang, H.~Zhang, B.~Di, K.~Bian, Z.~Han, and L.~Song,
  ``{{MetaLocalization}}: {{Reconfigurable Intelligent Surface Aided Multi-User
  Wireless Indoor Localization}},'' \emph{IEEE Transactions on Wireless
  Communications}, vol.~20, no.~12, pp. 7743--7757, 2021.

\bibitem{dardariNLOSNearFieldLocalization2021}
D.~Dardari, N.~Decarli, A.~Guerra, and F.~Guidi, ``{{LOS}}/{{NLOS Near-Field
  Localization}} with a {{Large Reconfigurable Intelligent Surface}},''
  \emph{IEEE Transactions on Wireless Communications}, pp. 1--1, 2021.

\bibitem{huangNearFieldRSSBasedLocalization2022}
S.~Huang, B.~Wang, Y.~Zhao, and M.~Luan, ``Near-{{Field RSS-Based Localization
  Algorithms Using Reconfigurable Intelligent Surface}},'' \emph{IEEE Sensors
  Journal}, vol.~22, no.~4, pp. 3493--3505, 2022.

\bibitem{heLargeIntelligentSurface2019}
J.~He, H.~Wymeersch, L.~Kong, O.~Silv{\'e}n, and M.~Juntti, ``Large
  {{Intelligent Surface}} for {{Positioning}} in {{Millimeter Wave MIMO
  Systems}},'' \emph{arXiv:1910.00060 [cs, eess, math]}, Sep. 2019.

\bibitem{anLowComplexityChannelEstimation2022}
J.~An, C.~Xu, L.~Gan, and L.~Hanzo, ``Low-{{Complexity Channel Estimation}} and
  {{Passive Beamforming}} for {{RIS-Assisted MIMO Systems Relying}} on
  {{Discrete Phase Shifts}},'' \emph{IEEE Transactions on Communications},
  vol.~70, no.~2, pp. 1245--1260, 2022.

\bibitem{diHybridBeamformingReconfigurable2020}
B.~Di, H.~Zhang, L.~Song, Y.~Li, Z.~Han, and H.~V. Poor, ``Hybrid
  {{Beamforming}} for {{Reconfigurable Intelligent Surface}} based {{Multi-User
  Communications}}: {{Achievable Rates With Limited Discrete Phase Shifts}},''
  \emph{IEEE Journal on Selected Areas in Communications}, vol.~38, no.~8, pp.
  1809--1822, Aug. 2020.

\bibitem{wuBeamformingOptimizationWireless2020}
Q.~Wu and R.~Zhang, ``Beamforming {{Optimization}} for {{Wireless Network
  Aided}} by {{Intelligent Reflecting Surface With Discrete Phase Shifts}},''
  \emph{IEEE Transactions on Communications}, vol.~68, no.~3, pp. 1838--1851,
  Mar. 2020.

\bibitem{liJointDesignHybrid2022}
R.~Li, B.~Guo, M.~Tao, Y.-F. Liu, and W.~Yu, ``Joint {{Design}} of {{Hybrid
  Beamforming}} and {{Reflection Coefficients}} in {{RIS-Aided mmWave MIMO
  Systems}},'' \emph{IEEE Transactions on Communications}, vol.~70, no.~4, pp.
  2404--2416, Apr. 2022.

\bibitem{zhangConfiguringIntelligentReflecting2022}
Y.~Zhang, K.~Shen, S.~Ren, X.~Li, X.~Chen, and Z.-Q. Luo, ``Configuring
  {{Intelligent Reflecting Surface With Performance Guarantees}}: {{Optimal
  Beamforming}},'' \emph{IEEE Journal of Selected Topics in Signal Processing},
  vol.~16, no.~5, pp. 967--979, Aug. 2022.

\bibitem{sanchezOptimalLowComplexityBeamforming2021}
J.~Sanchez, E.~Bengtsson, F.~Rusek, J.~Flordelis, K.~Zhao, and F.~Tufvesson,
  ``Optimal, {{Low-Complexity Beamforming}} for {{Discrete Phase Reconfigurable
  Intelligent Surfaces}},'' in \emph{2021 {{IEEE Global Communications
  Conference}} ({{GLOBECOM}})}, 2021, pp. 01--06.

\bibitem{ren2022linear}
S.~Ren, K.~Shen, X.~Li, X.~Chen, and Z.-Q. Luo, ``A linear time algorithm for
  the optimal discrete {IRS} beamforming,'' \emph{IEEE Wireless Communications
  Letters}, vol.~12, no.~3, pp. 496--500, 2022.

\bibitem{xiong2023optimal}
R.~Xiong, T.~Mi, K.~Wan, X.~Dong, J.~Lu, F.~Wang, and R.~C. Qiu, ``Optimal
  discrete beamforming of ris-aided wireless communications: An inner product
  maximization approach,'' 2023.

\bibitem{pekcan2023achieving}
D.~K. Pekcan and E.~Ayanoglu, ``Achieving optimum received power with
  elementwise updates in the least number of steps for discrete-phase riss,''
  2023.

\bibitem{shahmansooriPositionOrientationEstimation2018b}
A.~Shahmansoori, G.~E. Garcia, G.~Destino, G.~{Seco-Granados}, and
  H.~Wymeersch, ``Position and {{Orientation Estimation Through Millimeter-Wave
  MIMO}} in {{5G Systems}},'' \emph{IEEE Transactions on Wireless
  Communications}, vol.~17, no.~3, pp. 1822--1835, Mar. 2018.

\bibitem{dongSimulationStudyMillimeter2015}
M.~Dong, W.-M. Chan, T.~Kim, K.~Liu, H.~Huang, and G.~Wang, ``Simulation study
  on millimeter wave {{3D}} beamforming systems in urban outdoor multi-cell
  scenarios using {{3D}} ray tracing,'' in \emph{2015 {{IEEE}} 26th {{Annual
  International Symposium}} on {{Personal}}, {{Indoor}}, and {{Mobile Radio
  Communications}} ({{PIMRC}})}, Aug. 2015, pp. 2265--2270.

\bibitem{youChannelEstimationPassive2020}
C.~You, B.~Zheng, and R.~Zhang, ``Channel {{Estimation}} and {{Passive
  Beamforming}} for {{Intelligent Reflecting Surface}}: {{Discrete Phase
  Shift}} and {{Progressive Refinement}},'' \emph{IEEE Journal on Selected
  Areas in Communications}, vol.~38, no.~11, pp. 2604--2620, 2020.

\bibitem{wuIntelligentReflectingSurface2019b}
Q.~Wu and R.~Zhang, ``Intelligent {{Reflecting Surface Enhanced Wireless
  Network}} via {{Joint Active}} and {{Passive Beamforming}},'' \emph{IEEE
  Transactions on Wireless Communications}, vol.~18, no.~11, pp. 5394--5409,
  2019.

\end{thebibliography}

\end{document}